\newcommand{\ee}{\mathbf{e}}
\def\CI {\mathcal{I}}
\def\eps {\epsilon}
\def \be {  \varpi}
\newcommand{\fer}[1]{(\ref{#1})}
\newcommand{\ave}[1]{\langle#1\rangle}
\newcommand{\R}{\mathbb R}
\def\be#1\ee{\begin{equation}#1\end{equation}}
\numberwithin{equation}{section}
\newcommand{\bq}{\begin{equation}}
\newcommand{\eq}{\end{equation}}
\newtheorem{thm}{Theorem}
\theoremstyle{remark}
\newtheorem{rem}{Remark}
\theoremstyle{definition}
\begin{document}

\title[Fokker--Planck equations for opinion formation]{WRIGHT--FISHER--TYPE EQUATIONS  FOR OPINION FORMATION, LARGE TIME BEHAVIOR AND WEIGHTED logarithmic-SobOLEV INEQUALITIES}

\author{GIULIA FURIOLI}
\address{DIGIP, University of Bergamo, viale Marconi 5, 24044 Dalmine, Italy}
\email{giulia.furioli@unibg.it} 

\author{ADA PULVIRENTI}
\address{Department of Mathematics, University of Pavia, 
via Ferrata 1,
Pavia, 27100 Italy}
\email{ada.pulvirenti@unipv.it}

\author{ELIDE TERRANEO}
\address{Department of Mathematics,
University of Milan, via Saldini 50, 20133 Milano, Italy }
\email{elide.terraneo@unimi.it}

\author{GIUSEPPE TOSCANI}
\address{Department of Mathematics, University of Pavia,
via Ferrata 1,
Pavia, 27100 Italy}
\email{giuseppe.toscani@unipv.it}

\maketitle

\begin{abstract}
We study the rate of convergence to equilibrium of the solution of a Fokker--Planck type equation introduced in \cite{Tos06} to describe opinion formation in a multi-agent system. The main feature of this Fokker--Planck equation is the presence of a variable diffusion coefficient and boundaries, which introduce new challenging mathematical problems in the study of its long-time behavior. 
\end{abstract}

%\keywords{Kinetic models; Fokker--Planck equations; Relative entropies; Large-time behavior}

\section{Introduction}\label{intro}
Kinetic models for (continuous) opinion formation have been first introduced and discussed in \cite{Tos06}, starting from the study of a multi-agent system in which agents undergo binary interactions so that the personal opinion could be changed by means of compromise and self-thinking \cite{BKR03,BKVR03,FPTT17}. 
In most of the problems related to socio-economic studies of multi-agent systems  \cite{NPT,PT13}, the variable is assumed to vary in an unbounded domain (mainly the positive half-line). On the contrary, the opinion variable is assumed to take values in  the bounded interval $\CI= (-1, 1)$, the values $\pm 1$ denoting the extremal opinions. 
%At difference with most of the other problems related to socio-economic studies of multi-agent systems \cite{NPT,PT13}, models of opinion formation differ in reason of the fact that the opinion variable $x$ is usually assumed to take values in the bounded interval $\CI= (-1, 1)$, the values $\pm 1$ denoting the extremal opinions.  
Among the various models introduced in \cite{Tos06} (cf. also \cite{Bou,DMPW}), one Fokker--Planck type equation has to be distinguished in view of its equilibrium configurations, which are represented by Beta-type probability densities supported in the interval $(-1, 1)$. This Fokker--Planck equation for the opinion density $v(t,y)$, with $|y| < 1$,  is given by 
 \be\label{op-FP}
 \frac{\partial v(t,y)}{\partial t} = \frac \lambda 2\frac{\partial^2 }{\partial y^2}\left((1-y^2)
 v(t,y)\right) + \frac{\partial }{\partial y}\left((y -m)v(t,y)\right).
 \ee
 In \fer{op-FP}, $\lambda$ and $m$ are given constants, with   $\lambda >0$ and $-1 <m<1$. 
Suitable boundary conditions at the boundary points $y = \pm 1$ then guarantee conservation of mass and momentum of the solution \cite{FPTT17}.  
 Equation \fer{op-FP} possesses  steady states which solve
\[
 %\label{ch6:FPstaz2} 
 \frac
\lambda 2\frac{d}{d y}\left((1-y^2)v(y)\right) + (y -m) v(y)= 0.
 \]
In case a mass density equal to unity is chosen, the steady state equals a probability density of Beta type, given by
 \be\label{beta}
v_{m,\lambda}(y)= C_{m,\lambda} (1-y)^{-1 + \frac{1-m}\lambda} (1+y)^{-1 + \frac{1+m}\lambda}.
 \ee
In \fer{beta} the constant $C_{m,\lambda}$ is such that the mass of $v_{m,\lambda}$
is equal to one. Since $-1 <m<1$, $v_{m,\lambda}$ is integrable on $\CI$.
Note that $v_{m,\lambda}$  is continuous on $\CI$, and as soon as $\lambda > 1+|m|$ tends to infinity as $y \to \pm 1$. 

A better understanding of the social meaning of the parameters $\lambda$ and $m$ appearing in \fer{op-FP} comes from the microscopic description of the opinion change in a multi-agent system through binary interactions among agents, leading to the Boltzmann type kinetic equation considered in \cite{Tos06}. 
Given a pair of agents with opinions $x$ and $x_\ast$, it was assumed in \cite{Tos06} that any elementary interaction between them modifies the entering opinions according to 
\begin{equation}
\begin{split}
    x'&=x+\gamma (x_\ast-x)+D(x)\eta, \\
    x_\ast'&=x_\ast+\gamma (x-x_\ast)+D(x_\ast)\eta_\ast.
    \label{eq:binary}
\end{split}
\end{equation}
The right-hand side of  \fer{eq:binary} describes the modification of the opinion in terms of the quantity $\gamma(x_\ast-x)$ (respectively $\gamma(x_\ast-x)$), that measures the \emph{compromise} between opinions with intensity $\gamma$,  $0<\gamma<1$,  and a random contribution, given by  the  random variable $\eta$ (respectively $\eta_\ast$), modelling stochastic fluctuations induced by the \emph{self-thinking} of the agents.  $D(\cdot)\geq 0$ is an opinion-dependent diffusion coefficient modulating the amplitude of the stochastic fluctuations, that is the variance of $\eta$ and $\eta_\ast$. In \cite{Tos06} the two random variables were assumed to be independent and identically distributed with zero mean and variance $\sigma^2$.   Let us further set
 \be\label{lam}
 \lambda = \frac{\sigma^2}\gamma.
 \ee
Then, interactions of type \fer{eq:binary} with small values of $\lambda$ characterize compromise dominated societies, while interactions with large values of $\lambda$ characterize self-thinking dominated societies. 

Introducing the distribution function $f=f(t,\,x):\R_+\times [-1,\,1]\to\R_+$, such that $f(t,\,x)dx$ is the fraction of agents with opinion in $[x,\,x+dx]$ at time $t$, the binary rules~\eqref{eq:binary} give rise to a Boltzmann-type kinetic equation, that in weak form reads
\begin{multline}
    \frac{d}{dt}\int_{-1}^1\varphi(x)f(t,\,x)\,dx \\
    =\frac{1}{2}\int_{-1}^1\int_{-1}^1\ave{\varphi(x')+\varphi(x_\ast^\prime)-\varphi(x)-\varphi(x_\ast)}f(t,\,x)f(t,\,x_\ast)\,dx\,dx_\ast,
    \label{eq:boltz}
\end{multline}
where $\varphi:[-1,\,1]\to\R$ is an arbitrary test function, i.e. any observable quantity depending on the microscopic state of the agents, and where we denoted by $\langle \cdot \rangle$ the mathematical expectation.  Choosing $\varphi(x)=1$, one shows that the integral of $f$ with respect to $x$ is constant in time, i.e. that the total number of agents is conserved. This also implies that $f$ can be thought as a probability density for every $t>0$. Choosing instead $\varphi(x)=x$,  and considering that \fer{eq:binary} implies
 \[ 
 \langle x^\prime+x^\prime_\ast \rangle = x + x_\ast,
 \] 
one concludes that
\begin{equation}
    \frac{d}{dt}\int_{-1}^1 xf(t,\,x)\,dx= 0.
         \label{eq:mean}
\end{equation}
Therefore the mean opinion $m:=\int_{-1}^1 xf(t,\,x)\,dx$ is  conserved in time.
  As shown in \cite{Tos06}, one can recover an explicit expression of the asymptotic distribution function at least in the so-called \textit{quasi-invariant regime}, i.e. the one in which the variation of the opinion in each binary interaction is small. To describe such a regime, one scales the parameters $\gamma$, $\sigma^2$ in~\eqref{eq:binary} as
\begin{equation}
    \gamma\to\epsilon\gamma, \qquad \sigma^2\to\epsilon\sigma^2,
    \label{eq:scaling}
\end{equation}
where $\epsilon>0$ is an arbitrarily small scaling coefficient. Moreover,  to study the large time behavior of the system, one introduces the new time scale $t \to\epsilon t$ and  scales the distribution function as $v(t,\,x):=f(\frac{t}{\epsilon},\,x)$. In this way,  at every fixed $t>0$ and in the limit $\epsilon\to 0^+$, $v$ describes the large time trend of $f$. Moreover, as shown in \cite{Tos06}, if $D(x) = \sqrt{1-x^2}$, $v(t,x)$ satisfies the Fokker--Planck equation \fer{op-FP}.

Since the value of $\lambda$ is left unchanged by the scaling \fer{eq:scaling} leading from the Boltzmann-type equation \fer{eq:boltz} to the Fokker--Planck type equation \fer{op-FP}, the parameter $\lambda$ maintains its meaning also in the target equation.  The roles of the constants $\lambda$ and $m$ are evident also by looking at the shape of the steady Beta distribution \fer{beta}.  We can observe that, by fixing for example $m>0$, increasing the values of $\lambda$, and consequently moving from a compromise dominated to a self-thinking dominated society, such a distribution may depict a transition from a strong consensus around the mean  to a milder consensus, and further to a radicalisation in the extreme opinion $x=1$  up to the appearance of a double radicalisation in the two opposite extreme opinions $x=\pm 1$.

In view of the described social meaning, a relevant problem related to the solution to the Fokker--Planck equation \fer{op-FP} is to understand at which speed the solution $v(t)$ converges to its equilibrium configuration, and to reckon how this rate depends on the parameters $\lambda$ and $m$. Indeed, as outlined before, it is easily recognized that different values of these parameters give raise to situations in which the extremal opinions are not attracting, and this happens for $\lambda < 1-|m|$, or situations in which opinions are polarized around the extreme ones ($\lambda > 1+|m|$). Also, it is not clear if these (different) steady states are reached very quickly in time, independently of the values of the parameters.

As discussed in \cite{FPTT17}, in analogy with  the methods developed for the classical Fokker--Planck equation \cite{T99}, the large-time behavior of the solution to \fer{op-FP} can be fruitfully studied by resorting to entropy methods \cite{MTV}. 
This corresponds to the study of the evolution in time of various Lyapunov functionals, the most known being the Shannon entropy of the solution relative to the steady state.  We recall here that the relative Shannon entropy of two probability densities $f$ and $g$ supported on the bounded interval $\CI$ is {defined} by the formula
 \be\label{relH}
 H(f,g)= \int_{\CI} f(x) \log \frac {f(x) }{g(x) }\, dx.
 \ee
 Note that $H(f,g)$ can be alternatively written as
\[
\int_\CI \left( \frac{f(x)}{g(x)} \log \frac{f(x)}{g(x)} -  \frac{f(x)}{g(x)} +1\right ) g(x)dx,
\]
which is the integral of a nonnegative function.

As shown in \cite{FPTT17}, the relative entropy  $H(v(t), v_{m,\lambda})$ decreases in time, and its time variation can be expressed by the \emph{entropy production} term
 \be\label{ep}
\tilde I(v(t), v_{m,\lambda}) = 
\int_ {\CI}\frac \lambda 2 (1-y^2)
\left(\partial_y \log  \frac {v(t,y)}{ v_{m,\lambda}(y)}\right )^2 v(t,y) dy.
 \ee 
While for the classical Fokker--Planck equation \cite{T99},  exponential in time convergence at explicit rate follows in consequence of the logarithmic Sobolev inequality,  the results in presence of the weight in \fer{ep} are less satisfactory. Various convergence results have been obtained in \cite{FPTT17} by resorting to a generalization of the so-called Chernoff inequality with weight, first proven by Klaassen \cite{Kla}. The main consequence of this inequality \cite{FPTT17}, was to show that exponential convergence to equilibrium with an explicit rate holds at least for initial values $v_0$ for \fer{op-FP}  close to the steady state \fer{beta} in the weighted $L^2$-norm
\be\label{L2}
\|v_0-v_{m,\lambda}\|_*^2 := \int_\CI |v_0(y)-v_{m,\lambda}(y)|^2 v_{m,\lambda}(y)^{-1}\, dy.
\ee
Also, a weaker convergence result was proven for general initial data, by showing that the standard $L^1$-distance decays to zero at a polynomial rate (without any explicit rate of convergence). 

Related results have been obtained by Epstein and Mazzeo in \cite{EM10} for the adjoint equation 
\be\label{ad-FP}
 \frac{\partial u(t,x)}{\partial t} = \frac \lambda 2 (1-x^2) \frac{\partial^2  u(t,x) }{\partial x^2}- (x-m)\frac{\partial  u(t,x) }{\partial x} , \quad  t>0,\quad x \in \CI.
\ee
Indeed, the Fokker--Planck equation  \fer{op-FP} is naturally coupled to   \fer{ad-FP} since, at least formally, if $v$ is a solution of  \fer{op-FP}, then 
 \be\label{rel1}
 u(t,x) = \frac{v(t,x)}{v_{m,\lambda}(x)}
 \ee
is a solution of \fer{ad-FP} (remark that the notation we have chosen for the solutions $v(t,y)$ of  \fer{op-FP} and $u(t,x)$ of \eqref{ad-FP} is the same as in the paper  \cite{EM10} to which we will often refer in the sequel of the paper).
 Among other results, in \cite{EM10} exponential convergence in $L^1(\CI)$ of $v(t)$ towards $v_{m,\lambda}$ has been proven (without rate) by resorting to classical analysis of semigroups.
 
In this paper we aim at proving that entropy methods can also produce exponential convergence in $L^1(\CI)$ towards equilibrium with an explicit rate, at least in some range of the parameters $\lambda$ and $m$.
The result follows from a new weighted logarithmic-Sobolev inequality satisfied by the Beta functions \fer{beta} when they belong to $L^2(\CI)$. In this case, we will prove  that there exists an explicitly computable constant $K_{m,\lambda} >0$ such that,  for any
  probability density $\varphi \in L^1(\CI)$ absolutely continuous with respect to $v_{m,\lambda}$ 
  \be\label{ok}
 H(\varphi, v_{m,\lambda}) \leq K_{m,\lambda} \tilde I(\varphi, v_{m,\lambda}).
 \ee
 Inequality \fer{ok} requires that $\lambda >0$, $m\in \CI$ be such that
 \[
1-\frac \lambda 2 >0, \quad {\rm if}\,\, m=0, \quad
1-\frac \lambda 2 \geq |m|, \quad {\rm if}\,\, m\neq 0.
\] 
and  allows us  to obtain exponential convergence in relative entropy with an explicitly computable rate.

In more details, this is the plan of the paper: we will start by recalling in Section \ref{sol} an existence result for the initial-boundary value problem for the Fokker--Planck equation \fer{op-FP}, as follows from the analysis of Wright--Fisher type equations presented in \cite{EM10} for the adjoint equation \fer{ad-FP}. Then, the proof of the new logarithmic-Sobolev inequality for Beta functions and its consequences on the large-time behavior of the solution to equation \fer{op-FP} will be studied in Section \ref{LS}. 
Last, in Sections \ref{dist} and \ref{concl} we will discuss the case $m=0$, $\lambda =1$ which leads to a uniform density at equilibrium,  and we will address some concluding remarks.

%%%%%%%%%%%%%%%%%%%%%%%%%%%%%%%%%%%%%%%%%%%%%%%%%%%%%%%%%
\section{Existence and properties of solutions}\label{sol}
For given constants $\lambda >0$ and $m\in \CI$, let us consider the initial-boundary value problem
\be\label{main}
\left\{
\begin{aligned}
&\partial_t v(t,y)= \frac \lambda 2\, \partial_y^2 \left((1-y^2) v(t,y)\right ) +\partial_y \left((y-m)v(t,y)\right ),\quad  t>0,\quad y \in \CI\\
&v(0,y)=v_0(y) \ge 0 \in L^1(\CI), 
\end{aligned}
\right .
\ee
with boundary conditions
% \be\label{bc}
% \left\{
% \begin{aligned}
% &\lim_{y\to -1^+} (1-y^2)v(t,y)= 0, \quad t>0 \\
%  &\lim_{y\to 1^-} (1-y^2)v(t,y)= 0,  \quad t>0 \\
% &\lim_{y\to -1^+} (y-m)v(t,y) + \frac\lambda{2} \frac{\partial}{\partial y}\left((1-y^2) v(t,y)\right) = 0,  \quad t>0\\
% &\lim_{y\to 1^-} (y-m)v(t,y) + \frac\lambda{2} \frac{\partial}{\partial y}\left((1-y^2) v(t,y)\right)  = 0,\quad t>0 .
% \end{aligned}
% \right .
% \ee
\be\label{bc-mom}
\lim_{y\to -1^+} (1-y^2)v(t,y)= \lim_{y\to 1^-} (1-y^2)v(t,y)= 0,  \quad t>0 
\ee
and 
 \be\label{bc-mass}
 \left\{
 \begin{aligned}
 &\lim_{y\to -1^+} (y-m)v(t,y) + \frac\lambda{2} \frac{\partial}{\partial y}\left((1-y^2) v(t,y)\right) = 0,  \quad t>0\\
 &\lim_{y\to 1^-} (y-m)v(t,y) + \frac\lambda{2} \frac{\partial}{\partial y}\left((1-y^2) v(t,y)\right)  = 0,\quad t>0 .
 \end{aligned}
 \right .
 \ee
Conditions \eqref{bc-mom} and \eqref{bc-mass}  are suggested by the nature of the problem, since they imply momentum and mass conservation of the (possible) solution to the Fokker--Planck equation. 
While condition  \fer{bc-mom} is automatically satisfied for a sufficiently
regular density $v$, condition  \fer{bc-mass} requires an exact balance
between the so-called advective and diffusive fluxes on the boundaries
$y= \pm 1$. This condition is usually referred to as the \emph{no-flux} boundary
condition \cite{FPTT17}.

The linear Fokker--Planck equation in \fer{main} has a variable diffusion coefficient and the variable $y$ belongs to the bounded interval $\CI$, and this requires to consider boundary conditions. An alternative formulation would be to  consider the pure initial value problem on the whole real line, by introducing the diffusion coefficient $(1-y^2) \chi(\CI)$, where $\chi(X)$ denotes the characteristic function of the set $X\subseteq \R$. The initial value problem for Fokker--Planck equations with general non smooth coefficients has been recently considered by Le Bris and Lions \cite{LL08}. However, diffusion coefficients as $(1-y^2) \chi(\CI)$ are not included in their analysis, and the results in \cite{LL08} do not apply. For such a problem a general theory about existence, uniqueness and continuous dependence on initial data  still does not exist. 

On the other hand, a quite general theory has been recently developed by  Epstein and Mazzeo in \cite{EM10} for the  equation \fer{ad-FP}.  Their results give some insight also on our Fokker--Planck equation \fer{op-FP}, subject to no-flux boundary conditions as given in \fer{bc-mass}.

Equation \fer{ad-FP} is a Wright--Fisher type equation, of the form
\[
\partial_t u(t,x)= a(x) \partial_x^2 u(t,x) +b(x) \partial_x u(t,x),\quad t>0,\quad  x \in (A,B)
\]
where $A$, $B \in \R$, $a\in C^\infty([A,B])$,  $b \in C^\infty([A,B])$  with
\[
a(x)=(x-A)(B-x)\tilde a(x),\quad \tilde a \in C^\infty([A,B]), \quad \tilde a(x) >0 \text{\  for all \ } x\in [A,B],
\]
and
\[
b(A)\geq 0,\quad b(B)\leq 0.
\]
Since our results heavily depend on the precise analysis by Epstein and Mazzeo on the solutions of the Wright--Fisher--type equations, we collect in the next Theorem the results we need about these solutions. All the details can be  extracted from \cite{EM10}. In the rest, we will use as usual the notation $\bar\CI = [-1,1]$.

\begin{thm}[Epstein--Mazzeo \cite{EM10}] \label{EM}
For all constants $\lambda >0$ and $m\in \CI$ let us consider the initial-boundary value problem \fer{main}
with no-flux boundary conditions, as given by \fer{bc-mass}. 
Then, there exists a kernel $q_t(x,y):\{t>0, x\in \bar\CI, y\in \CI\} \rightarrow \mathbb R$ such that 
\be\label{Qt}
Q_tv_0(y):= \int_{-1}^1 q_t(x,y) v_0(x) dx
\ee
is a classical solution of the Cauchy problem. 
The kernel $q_t(x,y)$ satisfies the properties
\begin{enumerate}[1)]
\item $q_t(x,y) \in C^\infty\left((0,\infty)\times \bar\CI\times \CI\right)$;
\item $q_t(x,y) >0$ on $(0,\infty)\times \bar\CI\times \CI$;
\item for $y\to -1^+$ we have  $q_t (x,y) \sim (1+y)^{-1+\frac {1+m}\lambda } \varphi(t,x)$ for all $t>0$, $x\in \bar\CI$ with $\varphi \in C^\infty$;
\item for $y\to 1^-$ we have $q_t (x,y) \sim (1-y)^{-1+\frac {1-m}\lambda } \tilde \varphi(t,x)$ for all $t>0$, $x\in \bar\CI$ with $\tilde \varphi \in C^\infty$;
\item for all $t>0$ and all $x\in \bar\CI$ we have
\[
\begin{aligned}
&\lim_{y\to -1^+} \left(\frac \lambda 2 \partial_y \left((1-y^2) q_t(x,y)\right ) +(y-m)q_t(x,y)\right )=0\\
&\lim_{y\to 1^-} \left(\frac \lambda 2 \partial_y \left((1-y^2) q_t(x,y)\right ) +(y-m)q_t(x,y)\right )=0.
\end{aligned}
\]
%\item for all $t>0$ and all $y\in \CI$ we have
%\[
%\begin{aligned}
%& \lim_{x\to -1^+} (1-x)^{\frac {1-m}\lambda}(1+x)^{\frac {1+m}\lambda} \partial_x q_t(x,y)=0\\
%&\lim_{x\to 1^-} (1-x)^{\frac {1-m}\lambda}(1+x)^{\frac {1+m}\lambda} \partial_x q_t(x,y)=0. 
%\end{aligned}
%\]
\end{enumerate}
As a consequence, the solution $v(t,y)= Q_tv_0(y)$ satisfies
\begin{enumerate}[1')]
\item $v(t,y)\in C^\infty\left((0,\infty)\times \CI\right)$;
\item $v(t,y) >0$ on $(0,\infty)\times \CI$;
\item for $y\to -1^+$ we have  $v(t,y) \sim (1+y)^{-1+\frac {1+m}\lambda } \psi(t)$ for all $t>0$ with $\psi\in C^\infty$;
\item for $y\to 1^-$ we have $v(t,y) \sim (1-y)^{-1+\frac {1-m}\lambda } \tilde \psi(t)$ for all $t>0$ with $\tilde \psi \in C^\infty$;
\item for all $t>0$  we have (no flux boundary conditions)
\[
\begin{aligned}
&\lim_{y\to -1^+} \left(\frac \lambda 2 \partial_y \left((1-y^2) v(t,y)\right ) +(y-m)v(t,y)\right )=0\\
&\lim_{y\to 1^-} \left(\frac \lambda 2 \partial_y \left((1-y^2) v(t,y)\right ) +(y-m)v(t,y)\right )=0.
\end{aligned}
\]
\end{enumerate}
Moreover,  $v\in C((0,\infty), L^1(\CI))$ and
\[ 
\lim_{t\to 0^+} \|v(t) -v_0\|_{L^1} =0.
\]
%Finally, for all $t>0$ the operator $Q_t$ defined as in \eqref{Qt} is  bounded linear on $L^1$ and so there is $C(t)>0$ such that  if $v_0$ and $w_0$ are $L^1$ probability densities, then
%\[
%\|Q_t v_0 -Q_t w_0\|_{L^1} \leq C(t) \|v_0 -w_0\|_{L^1}.
%\]
%{\bf bisogna precisare cosa è $C(t)$!}
\end{thm}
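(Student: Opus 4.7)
The plan is to derive the theorem as a translation of the Epstein--Mazzeo framework, applied to the adjoint Wright--Fisher equation \fer{ad-FP}, into the corresponding statements for the Fokker--Planck equation \fer{op-FP}.

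First, I would check that \fer{ad-FP} fits precisely into the Wright--Fisher class of \cite{EM10}, with $A=-1$, $B=1$, diffusion $a(x)=\frac{\lambda}{2}(x-A)(B-x)$ and $\tilde a(x)\equiv\lambda/2>0$ smooth, and drift $b(x)=-(x-m)$ satisfying $b(-1)=1+m>0$ and $b(1)=-(1-m)<0$ because $m\in\CI$. The Epstein--Mazzeo theory then yields a smooth semigroup kernel, call it $p_t(x,y)$, for \fer{ad-FP}, with positivity, $C^\infty$-regularity in $(t,x,y)$ on the appropriate domain, and the boundary asymptotics $(1\mp y)^{-1+(1\mp m)/\lambda}$ showing up in the argument that is integrated against the initial datum.

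Second, I would build $q_t(x,y)$ from $p_t$ via the duality between \fer{op-FP} and \fer{ad-FP}. The cleanest route is to use the pointwise correspondence \fer{rel1}: if $u(t,x)=\int p_t(x,y)u_0(y)\,dy$ solves \fer{ad-FP} for smooth $u_0$, then $v(t,y):=v_{m,\lambda}(y)u(t,y)$ solves \fer{op-FP}, from which one reads off $q_t(x,y)=v_{m,\lambda}(y)\,p_t(y,x)/v_{m,\lambda}(x)$ (using the symmetry of the Epstein--Mazzeo kernel with respect to the invariant measure $v_{m,\lambda}\,dx$, which is exactly the self-adjointness of the generator $L$ of \fer{ad-FP} in $L^2(\CI,v_{m,\lambda}\,dx)$). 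Properties 1), 2) transfer immediately. The boundary asymptotics 3), 4) come from the explicit factor $v_{m,\lambda}(y)$ in front: near $y=-1^+$, $v_{m,\lambda}(y)\sim c(1+y)^{-1+(1+m)/\lambda}$, and the Epstein--Mazzeo smoothness of $p_t(y,x)$ up to $y=-1$ (inherited from their regularity up to the boundary in the adjoint variable) provides the smooth prefactor $\varphi(t,x)$; the endpoint $y=1^-$ is handled identically.

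Third, the no-flux conditions 5) need verification. Because $v_{m,\lambda}$ is a stationary solution of \fer{op-FP}, the quantity $\frac{\lambda}{2}\partial_y\bigl((1-y^2)v_{m,\lambda}(y)\bigr)+(y-m)v_{m,\lambda}(y)$ vanishes identically, and $q_t(x,y)$ differs from $v_{m,\lambda}(y)$ near the endpoints by a factor smooth in $y$ up to the boundary (namely $p_t(y,x)/v_{m,\lambda}(x)$), so a direct computation of the flux shows that the singular leading part cancels and the remainder vanishes at $y=\pm 1$. The properties 1')--5') for $v(t,y)=Q_tv_0(y)$ are then obtained by integrating against $v_0\in L^1(\CI)$, with the smooth prefactors $\psi(t),\tilde\psi(t)$ being $\int\varphi(t,x)v_0(x)\,dx$ and its analogue; this requires a uniform-in-$x$ control on the Epstein--Mazzeo prefactors so that dominated convergence legitimizes both the boundary asymptotics and the passage of the derivatives under the integral.

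Finally, the $L^1$-continuity and $\|v(t)-v_0\|_{L^1}\to 0$ as $t\to 0^+$ follow from the semigroup structure: $Q_t$ is mass-preserving (by the no-flux property integrated over $\CI$) and positivity-preserving, hence an $L^1$-contraction on the cone of nonnegative densities, so a density argument reduces convergence to the case of smooth compactly-supported-in-$\CI$ data for which it is classical. The main obstacle I expect is precisely this last step of transferring the pointwise boundary behavior of $q_t$, which holds for each fixed $x$, to the integrated object $v(t,y)$: it demands a quantitative boundary estimate on $q_t$ uniform in $x\in\bar\CI$, which is available in \cite{EM10} but needs to be identified and cited correctly.
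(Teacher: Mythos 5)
The paper gives no proof of this theorem: it is stated as a package of results quoted from Epstein--Mazzeo \cite{EM10}, with the explicit remark that ``all the details can be extracted from'' that reference, so there is no internal argument to compare against. Your outline --- checking that \fer{ad-FP} fits their Wright--Fisher class with $\tilde a\equiv\lambda/2$ and $b(\mp 1)=\pm(1\pm m)$ of the right sign, transferring the kernel to \fer{op-FP} by duality/reversibility of the one-dimensional diffusion with respect to $v_{m,\lambda}\,dx$, reading the boundary asymptotics off the factor $v_{m,\lambda}(y)$, and reducing the $L^1$-continuity at $t=0^+$ to smooth data via the mass-preserving positive semigroup --- is a correct and faithful unpacking of exactly that citation, including your accurate identification of the one genuinely delicate point, namely that the pointwise-in-$x$ boundary estimates on $q_t(x,y)$ must be made uniform in $x\in\bar\CI$ before they can be integrated against a general $L^1$ initial datum.
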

In consequence of the validity of no-flux boundary conditions (property \emph{5')}) conservation of mass follows. Hence, since $v_0$ is a probability density, the solution $v(t)= Q_tv_0$ remains a  probability density for all $t>0$. Indeed
\[
\begin{aligned}
\frac d{dt} \int_{-1}^1 v(t,y) dy & =  \int_{-1}^1 \partial_t v(t,y) dy =  \int_{-1}^1  \partial_y  \left(\frac \lambda 2 \partial_y \left((1-y^2) v(t,y)\right ) +(y-m)v(t,y)\right )dy\\
& =  \left[\frac \lambda 2 \partial_y \left((1-y^2) v(t,y)\right ) +(y-m)v(t,y)\right ]_{-1}^1 =0.
\end{aligned}
\]
The  steady states for equation \eqref{main} are given by the Beta densities \fer{beta}.

Some remarks are in order. First of all, by means of  $\emph{3')}$ and $\emph{4')}$ of Theorem \ref{EM} we conclude that,  for any given initial datum $v_0$ that is a probability density,  the solution $v(t)=Q_tv_0$ has the same behavior at the boundary of $\CI$ of the corresponding steady state $v_{m,\lambda}$.

Consequently, in reason of  the regularity of both functions,  the probability density  $v(t)$, solution of the initial value problem, is absolutely continuous with respect to the steady state $v_{m,\lambda}$ for all times $t >0$,  
\be\label{ac}
\frac {v(t)}{ v_{m,\lambda}} \in C_b^\infty(\CI)
\ee
and it can be continuously extended to $\bar\CI$.
In addition, if the condition 
\be\label{cond1}
1-\lambda >|m|
\ee
is satisfied, both the steady state and the solution $v(t)$ vanish on the boundary of the domain.

%-----------------------------------
\section{Weighted logarithmic-Sobolev inequalities and large time behavior.}\label{LS}

As briefly discussed in the Introduction, our main goal is concerned with the study of the large-time behavior of the solution to the Fokker--Planck equation  \fer{op-FP}. This problem has been considered by Epstein and Mazzeo \cite{EM10},  who studied the large-time behavior of equation \fer{ad-FP}, and used this to prove exponential convergence in $L^1$ for large times of the solution $v(t)=Q_t v_0$ of the Cauchy problem \eqref{main} to the corresponding steady state $v_{m, \lambda}$  for the whole range of the allowed parameters $m\in \CI$ and $\lambda >0$.
While their result, obtained by classical semigroup arguments is very general,  the rate of the exponential convergence was not explicitly computed. A stronger result was recently obtained in \cite{FPTT17}. This result has been shown to hold for a large class of Fokker--Planck equations with non constant diffusion coefficients and bounded domains, by resorting to classical entropy type inequalities. 
Different Lyapunov functionals can be actually evaluated along the solution of the Fokker--Planck equation \fer{op-FP} and, in presence of some regularity of the solution itself, can be proven to be monotone decreasing in time.
%Indeed, the solution of the Fokker--Planck equation \fer{op-FP} shares a number of different Lyapunov functionals which, in presence of some regularity of the solution itself, can be proven to be monotone decreasing in time. 
Among them, the relative Shannon entropy defined in \fer{relH}, the Hellinger distance, the reverse relative Shannon entropy, and the weighted $L^2$-distance. 

Thanks to  Theorem \ref{EM}, we know that the solution of the o\-pi\-nion for\-mation equation \eqref{main} fulfills the conditions which allow the application of the formal results contained in \cite{FPTT17}. 
In particular,  the following result  about exponential convergence to equilibrium follows.
\begin{thm}[\cite{FPTT17}]\label{l2}
Let  $\lambda >0$  and $m\in \CI$ .  Let $v_0$ a probability density satisfying 
 \be\label{vic}
 \|v_0-v_{m,\lambda} \|_*^2 = \int_\CI\frac {\left(v_0(y)-v_{m,\lambda}(y)\right )^2}{v_{m,\lambda}(y)} dy <\infty
 \ee
where $v_{m,\lambda}$ is the stationary solution \eqref{beta} of  the Fokker--Planck equation  \eqref{main}. Then,  the solution $v(t,y)= Q_tv_0(y)$ of  \eqref{main} defined in \eqref{Qt} converges exponentially in time towards the steady state, and the following holds true
 \be\label{L22}
  \|v(t)-v_{m,\lambda} \|_*^2  \leq e^{-2t} \|v_0-v_{m,\lambda} \|_*^2  , \quad t >0.
 \ee
\end{thm}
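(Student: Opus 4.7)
My plan is to pass to the quotient $u(t,y):=v(t,y)/v_{m,\lambda}(y)$, which by \eqref{rel1} solves the adjoint Wright--Fisher equation \eqref{ad-FP}. Using the stationary identity $\tfrac{\lambda}{2}\partial_y\bigl((1-y^2)v_{m,\lambda}\bigr)+(y-m)v_{m,\lambda}=0$, a short computation rewrites the evolution of $u$ in the weighted divergence form
\[
v_{m,\lambda}(y)\,\partial_t u(t,y) \,=\, \partial_y\Bigl(\tfrac{\lambda}{2}(1-y^2)\,v_{m,\lambda}(y)\,\partial_y u(t,y)\Bigr),
\]
which makes manifest the self-adjoint character of the generator in $L^2(v_{m,\lambda}\,dy)$. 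By property \eqref{ac} of Theorem~\ref{EM}, $u(t,\cdot)\in C_b^\infty(\CI)$ extends continuously up to $\bar\CI$, and this is what legitimates the integrations by parts that follow.

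Set $\Phi(t):=\|v(t)-v_{m,\lambda}\|_*^2=\int_\CI(u(t,y)-1)^2\,v_{m,\lambda}(y)\,dy$. Differentiating under the integral, inserting the divergence identity above and integrating by parts once gives
\[
\Phi'(t) \,=\, 2\Bigl[(u-1)\,\tfrac{\lambda}{2}(1-y^2)\,v_{m,\lambda}\,\partial_y u\Bigr]_{-1}^{1} \,-\, \int_\CI \lambda(1-y^2)\,v_{m,\lambda}(y)\,\bigl(\partial_y u(t,y)\bigr)^2\,dy.
\]
The boundary term vanishes: $(1\pm y)\,v_{m,\lambda}(y)\sim (1\pm y)^{(1\mp m)/\lambda}\to 0$ as $y\to\pm 1$ (since $1\mp m>0$), while $u$ and $\partial_y u$ stay bounded by \eqref{ac} and the asymptotics in items \textit{3')}--\textit{4')} of Theorem~\ref{EM}. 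Hence
\[
\Phi'(t) \,=\, -\int_\CI \lambda(1-y^2)\,v_{m,\lambda}(y)\,\bigl(\partial_y u(t,y)\bigr)^2\,dy.
\]

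The concluding step is a weighted Poincar\'e (generalised Chernoff) inequality for the Beta measure $v_{m,\lambda}$: for every smooth $g$ with $\int_\CI g\,v_{m,\lambda}\,dy=0$,
\[
\int_\CI g^2\,v_{m,\lambda}\,dy \,\le\, \frac{\lambda}{2}\int_\CI (1-y^2)\,v_{m,\lambda}\,(g')^2\,dy.
\]
The centering hypothesis is met by $g=u(t,\cdot)-1$ thanks to mass conservation, which in turn is a consequence of the no-flux conditions \eqref{bc-mass} and item \textit{5')} of Theorem~\ref{EM}. Inserting this bound in the previous identity yields $\Phi'(t)\le -2\Phi(t)$, and Gr\"onwall's lemma immediately delivers $\Phi(t)\le e^{-2t}\Phi(0)$, i.e.\ \eqref{L22}.

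The delicate ingredient is the weighted Poincar\'e inequality with the explicit constant $\lambda/2$; this is where the generalised Chernoff inequality of Klaassen \cite{Kla}, employed in \cite{FPTT17}, plays its role. Its sharpness can be checked by testing with the first nonconstant Jacobi eigenfunction $g(y)=y-m$: both sides then reduce to $\operatorname{Var}_{v_{m,\lambda}}(Y)=\lambda(1-m^2)/(\lambda+2)$. This also explains why the rate cannot exceed $2$ without further use of the mean-preservation constraint $\int_\CI y\,v(t,y)\,dy=m$, which would discard the eigenvalue $-1$ and unlock the next Jacobi eigenvalue. The remaining technical work is the justification of the integration by parts at the endpoints, and this is fully covered by the endpoint asymptotics of $v(t,y)$ asserted by Theorem~\ref{EM}.
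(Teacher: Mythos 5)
Your argument is correct and is essentially the proof the paper delegates to \cite{FPTT17}: a weighted $L^2$ energy identity obtained from the divergence form of the equation in the variable $u=v/v_{m,\lambda}$, closed by the weighted Chernoff--Poincar\'e inequality of Klaassen \cite{Kla} for the Beta measure, with Theorem \ref{EM} supplying the regularity, mass conservation and boundary asymptotics that legitimate the integration by parts. Your sharpness check on the first Jacobi eigenfunction $g(y)=y-m$ (eigenvalue $-1$, both sides equal to $\lambda(1-m^2)/(\lambda+2)$) correctly explains why the rate is $2$ for all admissible $\lambda>0$ and $m\in\CI$.
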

Inequality \fer{L22} implies exponential convergence in $L^1$. Indeed  by Cauchy--Schwartz inequality, for any pair $f$, $g$ of probability  densities   on $\CI$ it holds
\[
\begin{aligned}
\int_\CI |f(y)-g(y)|\, dy &= \int_\CI \frac{|f(y)-g(y)| }{\sqrt{v_{m,\lambda}(y)} }\sqrt{v_{m,\lambda}(y)}\, dy\\
&\leq  
\left( \int_\CI \frac {\left(f(y)-g(y)\right )^2 }{v_{m,\lambda}(y)} \, dy \right )^{\frac 12}\left( \int_\CI  {v_{m,\lambda}}(y) \, dy\right )^{\frac 12}\\
&\leq  
\left( \int_\CI \frac {\left(f(y)-g(y)\right )^2 }{v_{m,\lambda}(y)} \, dy \right )^{\frac 12}.
\end{aligned}
\]
Hence, \eqref{L22} implies
\be\label{conv-expL1-L2}
\left \| v(t)-v_{m,\lambda}\right \|_{L^1} \leq e^{-t} \left( \int_\CI \frac {(v_0(y)-v_{m,\lambda}(y))^2}{v_{m,\lambda}(y)} dy\right )^{\frac 12}
\ee
for the whole set of allowed parameters $m\in \CI$ and $\lambda >0$. 

It is important to outline that  condition \fer{vic}, at least when $v_{m,\lambda}$ is equal to zero at the boundaries, is quite restrictive, and requires the initial data $v_0$  to be very close to the steady state. On the contrary,  if $\left(v_{m,\lambda}\right )^{-1}$ is bounded (and this happens when 
$\lim_{y \to -1^+} v_{m,\lambda}(y)= \lim_{y\to 1^-}v_{m,\lambda}(y)=+\infty$), condition \fer{vic} is satisfied any time $v_0$ is close to $v_{m,\lambda}$ in the $L^2$ distance.

%%%%%%
In what follows, we will prove that exponential convergence  in $L^1$ can be obtained also for initial values more general than the ones satisfying Theorem \ref{l2}. To this extent, we will show that  the Beta functions \fer{beta}, in a certain well defined range of the parameters $\lambda$ and $m$, satisfy a weighted logarithmic-Sobolev inequality.   The result allows us to apply to our Fokker--Planck equation for opinion formation the same strategy one can apply to the classical Fokker--Planck equation \cite{AMTU}. 

Let us briefly recall the main steps of the (entropy) method for the classical one-dimensional Fokker--Planck equation.  Given the initial value problem
\be\label{FP}
\left\{
\begin{aligned}
&\partial_t f(t,x)= \partial_x^2 f(t,x) + \partial_x(xf(t,x)),\quad x\in \R, t>0\\
&f(0,x)=f_0(x) \ge 0 \in L^1(\R)
\end{aligned}
\right .
\ee
where the initial value is a probability density function, one studies the evolution of the relative entropy functional $H(f(t), M)$, given by
\be\label{entr-class}
H(f(t), M) = \int_\R f(t,x) \log \frac{f(t,x)}{M(x)} dx
\ee
where $M$ is the Maxwellian (Gaussian)
\be\label{Maxw}
M(x)= \frac 1{\sqrt{2\pi}} e^{-\frac {|x|^2}{2}},
\ee
which can be easily recognized as the unique  steady state of equation \fer{FP}. It is well known (cf. for example \cite{McK66}) that, if $f(t)$ is a solution of the Cauchy problem \eqref{FP},  the relative entropy is monotone nonincreasing, and its time derivative is given by
\be\label{derivata}
\frac d{dt} H(f(t), M) = - I(f(t), M), \quad t>0
\ee
where $I(f(t), M)$ is the relative Fisher information  (the entropy production) defined as
\be\label{fisher-class}
I(f(t), M) = \int_{\R} \left(
\partial_x \log  \frac {f(t,x)}{M(x)}\right )^2 f(t,x) dx.
\ee
Relation \eqref{derivata} coupled with the logarithmic-Sobolev inequality (cf. for example \cite{T99})
\[
H(f(t), M) \leq \frac 12 I (f(t),M),\quad t>0
\]
 leads to the exponential decay to zero of the relative entropy \cite{T99,T13}  with explicit rate.
 Last,  resorting to the well-known Csisz\'ar--Kullback--Pinsker inequality \cite{C} 
\be\label{CK}
\|f-g\|_{L^1}^2 \leq 2 H(f,g), \quad f,g\in L^1
\ee
one obtains exponential convergence in $L^1$ to the Maxwellian density (always with sub-optimal explicit rate). 

Going back to our problem, let us assume that the entropy of the initial value relative to the Beta steady state is bounded
\be\label{Hfinita}
H(v_0, v_{m,\lambda})  <\infty.
\ee
Evaluating the time derivative of the relative entropy  (cf. the computations in  \cite{FPTT17}), one obtains  for the solution to the Fokker--Planck equation \eqref{main} a relation analogous to \eqref{derivata}, which now reads
\be\label{derivata-peso}
\frac d{dt} H(v(t), v_{m,\lambda}) = - \tilde I(v(t), v_{m,\lambda}), \quad t>0.
\ee
In \fer{derivata-peso} $\tilde I$ defines the weighted Fisher information 
\be\label{w-fisher}
\tilde I(v(t), v_{m,\lambda}) = 
\int_\CI \frac \lambda 2 (1-y^2)
\left(\partial_y \log  \frac {v(t,y)}{ v_{m,\lambda}(y)}\right )^2 v(t,y) dy.
\ee
As one can easily verify, the weight $\frac \lambda 2 (1-y^2)$ is due to the  variable diffusion coefficient  in equation \eqref{main}.
It is clear that, if one can prove that, for some universal constant $C >0$ the relative entropy is bounded by 
\[
H(v, v_{m,\lambda}) \leq C \tilde I(v, v_{m,\lambda}),
\]
one obtains, as in the classical case, the exponential convergence to equilibrium of the relative entropy of the solution at the explicit rate $C$. 

We prove indeed that the following holds.
%%%%%%%
%%%%%%% Weighted Log Sobolev inquality
%%%%%%%
\begin{thm}\label{LS-theo}
Let $\lambda >0$, $m\in \CI$ be such that
\be\label{cond}
\begin{aligned}
& 1-\frac \lambda 2 >0, \quad m=0\\
& 1-\frac \lambda 2 \geq |m|, \quad m\neq 0.
\end{aligned}
\ee
and let $v_{m,\lambda}$ be the Beta function on $\CI$ defined by \fer{beta}. Then, there exists an explicit constant $K_{m,\lambda} >0$ such that,  for any
  probability density $\varphi \in L^1(\CI)$ absolutely continuous with respect to $v_{m,\lambda}$ it holds
  \be\label{LS-peso}
 H(\varphi, v_{m,\lambda}) \leq K_{m,\lambda} \tilde I(\varphi, v_{m,\lambda}).
 \ee
 The constant $K_{m,\lambda} >0$ is explicitly computable and equals
 \be\label{costante}
 K_{m,\lambda} =  \left(1-\frac \lambda 2 + \sqrt{\left(1-\frac \lambda 2\right )^2-m^2}\right)^{-1}.
  \ee
\end{thm}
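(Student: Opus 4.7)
The plan is to view the inequality as a Bakry--Emery log-Sobolev inequality for the reversible diffusion on $\CI$ associated to the symmetric measure $v_{m,\lambda}\,dy$, and to extract the optimal constant from a pointwise curvature-dimension type computation. More precisely, let $L$ denote the generator
\[
Lu(x) = a(x)\,u''(x) + b(x)\,u'(x), \qquad a(x) = \tfrac{\lambda}{2}(1-x^2), \quad b(x) = -(x-m),
\]
which, by \fer{rel1}, is precisely the differential operator of the adjoint equation \fer{ad-FP} and whose invariant probability measure is $v_{m,\lambda}$. The associated carr\'e du champ is $\Gamma(u,u) = a(x)(u'(x))^2$, so that for $u = \varphi/v_{m,\lambda}$ an integration by parts gives
\[
\tilde I(\varphi, v_{m,\lambda}) = \int_\CI \Gamma(\log u, \log u)\,u\,v_{m,\lambda}\,dy = \int_\CI \frac{\Gamma(u,u)}{u}\,v_{m,\lambda}\,dy,
\]
so that \fer{LS-peso} is the standard modified log-Sobolev inequality $\mathrm{Ent}_{v_{m,\lambda}}(u) \leq K_{m,\lambda}\int \Gamma(u,u)/u\,d v_{m,\lambda}$.

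\textbf{Key step: a $\Gamma_2$ bound.} I would compute $\Gamma_2(u,u) = \tfrac{1}{2}L\Gamma(u,u) - \Gamma(u, Lu)$ explicitly. A direct calculation with the above $a,b$ yields
\[
\Gamma_2(u,u) = a^2(u'')^2 + a a' u' u'' + \left[\tfrac{aa''}{2} + \tfrac{a'b}{2} - ab'\right](u')^2.
\]
Viewing this as a quadratic form in $(u'', u')$, the Bakry--Emery condition $\Gamma_2(u,u) \geq \rho\,\Gamma(u,u) = \rho a (u')^2$ reduces (non-negativity of the discriminant) to
\[
\rho \leq \frac{1}{a(x)}\left[\tfrac{aa''}{2} + \tfrac{a'b}{2} - ab' - \tfrac{(a')^2}{4}\right](x).
\]
Substituting $a,b$ and simplifying, the bracket collapses cleanly to $\tfrac{\lambda}{2}\bigl(1-\tfrac{\lambda}{2} - m x\bigr)$, so the requirement becomes the elementary pointwise inequality
\[
1-\tfrac{\lambda}{2} - m x \;\geq\; \rho\,(1-x^2) \qquad \text{for every } x\in[-1,1].
\]

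\textbf{Optimizing $\rho$.} Setting $\alpha := 1-\lambda/2$, I would minimize $f(x) = (\alpha - mx)/(1-x^2)$ on $(-1,1)$. Under assumption \fer{cond} one has $\alpha \geq |m| > 0$, so $f$ is positive and blows up at $\pm 1$; the critical point equation $mx^2 - 2\alpha x + m = 0$ has reciprocal roots, exactly one of which, $x_\ast = (\alpha - \sqrt{\alpha^2-m^2})/m$, lies in $(-1,1)$. A short rationalization computation shows
\[
f(x_\ast) = \frac{\alpha + \sqrt{\alpha^2 - m^2}}{2},
\]
which thus is the optimal $\rho$; the case $m=0$ reduces by continuity to $\rho = \alpha$. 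Applying the Bakry--Emery theorem then yields \fer{LS-peso} with $K_{m,\lambda} = 1/(2\rho)$, exactly the formula \fer{costante}.

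\textbf{Main obstacle.} The whole argument is formal away from the boundary, but the measure $v_{m,\lambda}$ may either vanish or blow up at $y = \pm 1$, and $a(\pm 1)=0$, so the standard Bakry--Emery derivation requires justification of the integration by parts used to turn $\Gamma_2$-positivity into an entropy-Fisher information inequality. I would handle this by applying the inequality first to $u = \varphi/v_{m,\lambda}$ obtained from the smooth, strictly positive solution $\varphi = Q_t v_0$ given by Theorem~\ref{EM} (whose boundary behavior of types \emph{3')}--\emph{4')} exactly matches that of $v_{m,\lambda}$, so all boundary terms in the integration by parts vanish), and then passing to the limit $t\to 0^+$ for a general admissible $\varphi$ by a standard approximation/truncation argument together with lower semicontinuity of $H$ and $\tilde I$. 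A secondary minor subtlety is the equality case $\alpha = |m|$ (only allowed when $m\neq 0$), where $x_\ast$ slides to an endpoint and $\rho = |m|/2$; this is covered by continuity in $(\alpha,m)$ of both sides.
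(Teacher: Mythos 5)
Your argument is correct in substance and reaches the paper's exact constant, but it travels a genuinely different (though ultimately equivalent) road. The paper first performs the change of variables $z=\arcsin y$, which converts the adjoint operator into one with constant diffusion $\tfrac{\lambda}{2}$ on $\left(-\tfrac{\pi}{2},\tfrac{\pi}{2}\right)$ whose invariant density is $g_{m,\lambda}(z)=v_{m,\lambda}(\sin z)\cos z=e^{-\Psi(z)}$; it then applies the \emph{classical} Bakry--Emery criterion $\Psi''\geq \text{const}$ and transports the resulting unweighted logarithmic-Sobolev inequality back, the weight $\tfrac{\lambda}{2}(1-y^2)$ reappearing through the Jacobian. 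You instead stay in the original variable and run the intrinsic $\Gamma_2$ (curvature--dimension) computation for the degenerate operator $L=a\partial^2+b\partial$. These are the same computation in disguise: your pointwise condition $\bigl(1-\tfrac{\lambda}{2}-mx\bigr)/(1-x^2)\geq\rho$ is exactly the paper's condition $\tfrac{\lambda}{2}\Psi''(z)\geq\rho$ after substituting $y=\sin z$, and both minimizations give $\rho_{m,\lambda}=\tfrac12\bigl(1-\tfrac{\lambda}{2}+\sqrt{(1-\tfrac{\lambda}{2})^2-m^2}\bigr)$, hence $K_{m,\lambda}=1/(2\rho_{m,\lambda})$. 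What your route buys is conceptual economy (no change of variables, and it makes transparent why the weight $\tfrac\lambda2(1-y^2)$ is the natural carr\'e du champ); what the paper's route buys is that after the substitution the verification is literally the textbook one-dimensional criterion $\Psi''\geq\rho$, for which the approximation machinery is easiest to write down.

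The one place where your proposal is thinner than it should be is the point you yourself flag: the interval is incomplete for the intrinsic metric $ds=dx/\sqrt{a(x)}$ (the boundary is at finite distance), so $\Gamma_2\geq\rho\Gamma$ does not by itself deliver the entropy inequality, and this is not cured by taking $\varphi=Q_tv_0$ --- the obstruction lies in the semigroup/integration-by-parts derivation of the inequality, not in the regularity of the test density. The paper resolves this concretely by truncating to $\left[-\tfrac{\pi}{2}+\eps,\tfrac{\pi}{2}-\eps\right]$, renormalizing $f$ and $g_{m,\lambda}$ there (where the Bakry--Emery hypotheses hold on a compact interval), and passing to the limit $\eps\to 0$ using dominated and monotone convergence on both sides; some such explicit truncation would be needed to complete your version as well. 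Your treatment of the borderline case $1-\tfrac{\lambda}{2}=|m|$ (where the minimizer escapes to the endpoint but the infimum $\rho=|m|/2$ still bounds $f$ from below pointwise) is fine.
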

%%%%%%%%%%
\begin{rem}\label{remark}
It is worth underlying that conditions \eqref{cond} are equivalent to the condition that the corresponding Beta-type function $v_{m,\lambda}$ belongs to $L^2(\CI)$.
\end{rem}

A direct consequence of Theorem \ref{LS-theo} is the following
 %%%%%%
%%%%%% Exponential convergence
%%%%%% 
 \begin{thm}
 Let the parameters $\lambda >0$, $m\in \CI$ satisfy the conditions \fer{cond} of Theorem \ref{LS-theo}, and let  $v(t)=Q_tv_0$ be the solution to the initial-boundary value problem \fer{main} with no-flux boundary conditions, and initial data
 $v_0 \in L^1(\CI)$ a probability density such that the relative entropy $H(v_0, v_{m,\lambda})$ is finite. Then, the relative entropy decays exponentially to zero at an explicit rate, and
\be\label{conv}
\|v(t) -v_{m,\lambda}\|_{L^1} \leq\sqrt 2 e^{- \frac 1 {2K_{m,\lambda}} t} \sqrt {H(v_0, v_{m,\lambda})}.
\ee
In \fer{conv}  $K_{m,\lambda}>0$ is given by \fer{costante}.
 \end{thm}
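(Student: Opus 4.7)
The plan is to combine the weighted logarithmic--Sobolev inequality from Theorem \ref{LS-theo} with the entropy dissipation identity \fer{derivata-peso} to produce an exponential decay rate for the relative entropy, and then convert this to an $L^1$ bound via the Csisz\'ar--Kullback--Pinsker inequality \fer{CK}.

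First, I would justify \fer{derivata-peso} rigorously on the time interval $(0,\infty)$ using the regularity properties of the solution listed in Theorem \ref{EM}. Thanks to properties \emph{1')}--\emph{5')} together with \fer{ac}, for every $t>0$ the ratio $v(t,y)/v_{m,\lambda}(y)$ is smooth and bounded on $\CI$ and the no-flux boundary terms vanish at $y=\pm 1$, so the formal computation that leads to \fer{derivata-peso} (an integration by parts in which the boundary terms cancel thanks to the balance between advective and diffusive fluxes) is legitimate. This identity shows that $t\mapsto H(v(t),v_{m,\lambda})$ is nonincreasing on $(0,\infty)$.

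Next, since $v(t)$ is a probability density absolutely continuous with respect to $v_{m,\lambda}$ for all $t>0$, Theorem \ref{LS-theo} applies with $\varphi=v(t)$, giving
\be\label{pl:LS}
H(v(t),v_{m,\lambda})\leq K_{m,\lambda}\,\tilde I(v(t),v_{m,\lambda}), \qquad t>0.
\ee
Plugging \fer{pl:LS} into \fer{derivata-peso} yields
\[
\frac{d}{dt} H(v(t),v_{m,\lambda}) \leq -\frac{1}{K_{m,\lambda}} H(v(t),v_{m,\lambda}),\qquad t>0,
\]
and Gr\"onwall's lemma, applied on $[\varepsilon,t]$ for any $0<\varepsilon<t$, provides
\[
H(v(t),v_{m,\lambda})\leq e^{-(t-\varepsilon)/K_{m,\lambda}}\,H(v(\varepsilon),v_{m,\lambda}).
\]
By monotonicity of the relative entropy along the flow, $H(v(\varepsilon),v_{m,\lambda})\leq H(v_0,v_{m,\lambda})$, so letting $\varepsilon\to 0^+$ I obtain
\be\label{pl:expH}
H(v(t),v_{m,\lambda})\leq e^{-t/K_{m,\lambda}}\,H(v_0,v_{m,\lambda}),\qquad t>0.
\ee
Finally, the Csisz\'ar--Kullback--Pinsker inequality \fer{CK} gives $\|v(t)-v_{m,\lambda}\|_{L^1}^2\leq 2H(v(t),v_{m,\lambda})$, and combining with \fer{pl:expH} and taking square roots produces exactly \fer{conv}.

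The only delicate point I expect is the justification of the entropy dissipation identity \fer{derivata-peso} up to the boundary of $\CI$. This is already developed in \cite{FPTT17}, and in the present setting it is made fully rigorous by the boundary asymptotics \emph{3')}--\emph{4')} and the no-flux condition \emph{5')} of Theorem \ref{EM}, which guarantee that all boundary contributions in the integration by parts vanish and that the weighted Fisher information $\tilde I(v(t),v_{m,\lambda})$ is a well-defined integrable quantity for every $t>0$. The monotonicity of $H(v(t),v_{m,\lambda})$ in $t$ then lets me absorb the lack of a direct continuity statement at $t=0$ without any further approximation argument.
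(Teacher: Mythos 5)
Your proposal follows exactly the paper's own argument: combine the entropy dissipation identity \fer{derivata-peso} with the weighted logarithmic--Sobolev inequality \fer{LS-peso} applied to $\varphi=v(t)$, integrate the resulting differential inequality, and conclude via the Csisz\'ar--Kullback--Pinsker inequality \fer{CK}. Your extra care in running Gr\"onwall on $[\varepsilon,t]$ and using monotonicity of the relative entropy to pass to $H(v_0,v_{m,\lambda})$ as $\varepsilon\to 0^+$ is a harmless refinement of a step the paper states directly, not a different route.
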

\begin{proof}
We  already stressed in \eqref{ac} that, starting from the initial condition $v_0$,  the result by Epstein and Mazzeo  implies that the solution $v(t)=Q_t v_0$ defined in \eqref{Qt}  is absolutely continuous with respect to $v_{m,\lambda}$ for all $t>0$. Therefore, we can apply \eqref{derivata-peso} and then the weighted logarithmic-Sobolev inequality \eqref{LS-peso} with $\varphi(y)= v(t,y)$ for all $t>0$  to get
\[
\frac d{dt} H(v(t), v_{m,\lambda}) \leq - \frac 1{K_{m,\lambda} }H(v(t), v_{m,\lambda}), \quad t>0
\]
and this gives
\[
H(v(t), v_{m,\lambda}) \leq e^{-\frac 1{K_{m,\lambda} } t} H(v_0, v_{m,\lambda}).
\]
Then by the Csisz\'ar--Kullback--Pinsker inequality \eqref{CK}  we obtain
\be\label{conv-expL1-logsob}
\|v(t) -v_{m,\lambda}\|_{L^1} \leq \sqrt 2 e^{- \frac 1{2 K_{m,\lambda} } t} \sqrt {H(v_0, v_{m,\lambda})}, \quad t>0.
\ee
\end{proof}
Let us come back to the proof of Theorem \ref{LS-theo}.
The starting point is the well known Bakry--Emery result about logarithmic-Sobolev inequality.
\begin{thm}[Bakry--Emery \cite{BE}]
Let $M$ be a smooth, complete manifold and let $d\nu =e^{-\Psi}dx$ be a probability measure on $M$, such that $\Psi \in C^2(M)$ and $D^2\Psi + Ric \geq \rho I_n$, $\rho>0$.
Then, for every probability measure $\mu$ absolutely continuous with respect to $\nu$, we have
\be\label{BE}
H(\mu,\nu) \leq \frac 1{2\rho} I(\mu,\nu)
\ee
where
\[
H(\mu,\nu)=\int_M \log \frac {d \mu}{d \nu} d\mu 
\]
and
\[
I(\mu, \nu) = \int_M \left|\nabla \log \frac{d \mu}{d \nu} \right |^2 d \mu .
\]
\end{thm}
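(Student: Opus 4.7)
The plan is to prove the Bakry--Emery theorem via the $\Gamma_2$--calculus combined with a semigroup interpolation argument. Consider the symmetric diffusion semigroup $(P_t)_{t\geq 0}$ on $M$ with infinitesimal generator $L=\Delta-\nabla\Psi\cdot\nabla$; by construction $\nu$ is the invariant probability measure of $L$, $L$ is essentially self-adjoint in $L^2(\nu)$, and for sufficiently smooth $f,g$ one has the integration by parts formula $\int_M f\,Lg\,d\nu=-\int_M\nabla f\cdot\nabla g\,d\nu$. Set $g:=d\mu/d\nu$ and $g_t:=P_tg$, and consider the relative entropy $\phi(t):=H(g_t\,\nu,\nu)=\int_M g_t\log g_t\,d\nu$ and the Fisher information $J(t):=I(g_t\,\nu,\nu)=\int_M g_t|\nabla\log g_t|^2\,d\nu$. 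Mass conservation gives $\int_M g_t\,d\nu=1$ for all $t$, and a direct computation using the symmetry of $L$ gives the de Bruijn-type identity $\phi'(t)=-J(t)$.

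The heart of the proof is the dissipation of the Fisher information. Define the carr\'e du champ $\Gamma(f)=|\nabla f|^2$ and its iterate $\Gamma_2(f)=\tfrac12 L\Gamma(f)-\Gamma(f,Lf)$. Starting from Bochner's identity $\tfrac12\Delta|\nabla f|^2=|D^2f|^2+\nabla f\cdot\nabla\Delta f+\mathrm{Ric}(\nabla f,\nabla f)$ and expanding $\tfrac12 L\Gamma(f)-\Gamma(f,Lf)$, the cross terms involving $\nabla\Psi$ cancel by the symmetry of $D^2f$, and one finds the Bakry--Emery identity
\[
\Gamma_2(f)=|D^2f|^2+\bigl(\mathrm{Ric}+D^2\Psi\bigr)(\nabla f,\nabla f),
\]
so that the curvature hypothesis yields the pointwise bound $\Gamma_2(f)\geq\rho\,\Gamma(f)$. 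Setting $h_t:=\log g_t$, one has $\nabla g_t=g_t\nabla h_t$ and $\partial_t h_t=Lh_t+\Gamma(h_t)$; differentiating $J(t)=\int_M g_t\,\Gamma(h_t)\,d\nu$ in time and integrating by parts twice (using the symmetry of $L$) yields, after simplification, the key identity
\[
J'(t)=-2\int_M g_t\,\Gamma_2(h_t)\,d\nu\leq -2\rho\int_M g_t\,\Gamma(h_t)\,d\nu=-2\rho\,J(t),
\]
so that Gronwall's lemma gives $J(t)\leq e^{-2\rho t}J(0)=e^{-2\rho t}I(\mu,\nu)$.

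Integrating $\phi'(t)=-J(t)$ on $(0,\infty)$ and using that $\phi(t)\to 0$ as $t\to\infty$ (which follows from the exponential convergence of $g_t$ to the constant density $1$ under the curvature assumption) finally yields
\[
H(\mu,\nu)=\phi(0)-\phi(\infty)=\int_0^\infty J(t)\,dt\leq I(\mu,\nu)\int_0^\infty e^{-2\rho t}\,dt=\frac{1}{2\rho}\,I(\mu,\nu),
\]
which is precisely \eqref{BE}. The principal obstacle is to make these manipulations rigorous on a general complete Riemannian manifold: $P_tg$ is a priori only in $L^1(\nu)$ for a generic density, whereas the above computations require enough smoothness and strict positivity of $g_t$ to give meaning to $\log g_t$, $\Gamma_2(\log g_t)$, and to justify the integrations by parts with no contribution at infinity. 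The standard remedy is to establish the inequality first under the additional assumption that $g$ is smooth, bounded and bounded away from zero---so that the hypoelliptic regularity of $P_t$ and the completeness of $M$ propagate those properties to $g_t$ and ensure all formal computations---and then extend to an arbitrary probability density by approximation, invoking the lower semicontinuity of $H$ and $I$ together with Fatou's lemma (the inequality being trivial when $I(\mu,\nu)=+\infty$).
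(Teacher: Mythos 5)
The paper does not prove this statement: it is the classical Bakry--Emery criterion, quoted from the cited reference \cite{BE} and used as a black box (the authors only specialize it to an interval of the real line and then run an approximation argument on $\left(-\frac\pi2,\frac\pi2\right)$). So there is no internal proof to compare against. Your argument is the standard semigroup/$\Gamma_2$ proof, which is essentially the original one: the commutation identity $\Gamma_2(f)=|D^2f|^2+(\mathrm{Ric}+D^2\Psi)(\nabla f,\nabla f)$, the resulting differential inequality $J'(t)\leq -2\rho\,J(t)$ for the Fisher information along the semigroup, and the integration of the de Bruijn identity $\phi'(t)=-J(t)$ over $(0,\infty)$. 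The sketch is correct as far as it goes, and the computations you display (the evolution $\partial_t h_t=Lh_t+\Gamma(h_t)$ for $h_t=\log g_t$, the cancellation of the $\nabla\Psi$ cross terms in $\Gamma_2$) are right. Two points are genuinely load-bearing and only gestured at: (i) the claim $\phi(t)\to 0$ as $t\to\infty$ needs an actual ergodicity argument --- for $g$ bounded above and below it follows from $P_tg\to\int_M g\,d\nu=1$ together with dominated convergence, and the curvature condition with $\rho>0$ does guarantee that convergence (e.g.\ via the Poincar\'e inequality it implies); and (ii) the reduction to densities bounded away from $0$ and $\infty$, together with the justification that the repeated integrations by parts produce no contribution at infinity on a complete manifold, is where most of the technical work of a complete proof lives. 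Since the theorem is imported rather than proved in the paper, leaving these as remarks is acceptable, but a self-contained write-up would have to carry them out.
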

If $M=[a,b]$ is an interval of the real line, $d \nu = g dx$ and $ d\mu= f dx$, with $f$ and $g$ probability densities, the assumptions in Bakry--Emery criterion read as follows
\be
\begin{aligned}
&g(x)=e^{-\Psi(x)},\\
&\Psi \in C^2([a,b])\\
&\min_{[a,b]}\Psi''(x)\geq \rho >0.
\end{aligned} 
\ee
Then, for any $f$ probability density on $[a,b]$ absolutely continuous with respect to $g$, inequality \eqref{BE} becomes
\be\label{BE-int}
\int_a^b f(x) \log \frac {f(x)}{g(x)} dx \leq \frac 1{2\rho} \int_a^b  \left(\frac d{dx} \log \frac{f(x)}{g(x)} \right )^2f(x) dx.
\ee
Of course this is a non--weighted logarithmic-Sobolev result. We are going to identify who will play the role of $\mu$ and $\nu$. If we take $M=\CI$ and $\nu= v_{m,\lambda} dx$ then two problems appear.
The first one is that the open interval $\CI$ is not a complete manifold and the other one is that even if we prove that $v_{m,\lambda}(y) = e^{- \Psi (y)}$ with $\Psi$ satisfying Bakry--Emery Theorem, then for any $\varphi$ probability density absolutely continuous with respect to 
$v_{m,\lambda}$ we would get the logarithmic-Sobolev inequality
\[
\int_{-1}^1 \varphi(x) \log \frac {\varphi(x)}{v_{m,\lambda}(x)} dx \leq \frac 1{2\rho} \int_{-1}^1  \left( \frac d{dx} \log \frac {\varphi(x)}{v_{m,\lambda}(x)}\right )^2 \varphi(x) dx.
\]
This is not enough to obtain \eqref{LS-peso} since $\frac \lambda 2 (1-y^2) \leq 1$ for $\lambda <2$ (which is implied by conditions \eqref{cond}).
It turns out that actually  $v_{m,\lambda}$ satisfies  $v_{m,\lambda}(y) = e^{- \Psi (y)}$ with $\Psi$ fulfilling Bakry--Emery conditions.  Since we are going to prove a stronger inequality in a different way, we leave the details to the interested reader.
%%%%%%%%%%%%%%%%%%%%%%%%%%%
%%%%%%%%%%%%%%%%%%%%%%%%%%% proof of Log Sobolev theorem
%%%%%%%%%%%%%%%%%%%%%%%%%%%

\medskip
\noindent{\bf Proof of Theorem \ref{LS-theo}.}
The main idea is to resort to a change of variable which transforms the weighted logarithmic-Sobolev inequality \eqref{LS-peso} we are looking for into a usual logarithmic-Sobolev inequality for a different probability density which satisfies the assumptions of the Bakry--Emery criterion.
Given the partial differential equation
\[
\partial_t v(t,y)= \frac \lambda 2 \partial_y^2 \left((1-y^2) v(t,y)\right ) +\partial_y \left((y-m)v(t,y)\right ),\quad  t>0,\quad y \in \CI
\]
with steady state $v_{m,\lambda}$, its adjoint equation reads
\be\label{main-adj}
\partial_t u(t,x)= \frac \lambda 2 (1-x^2)  \partial_x^2 u(t,x) - (x-m)\partial_x u(t,x), \quad  t>0,\quad x \in \CI.
\ee
If we now set in \fer{main-adj}
\[
f(t,s)=u(t,x)
\]
where
\[
\frac {ds}{dx} = \frac 1{\sqrt {1-x^2}}, \quad x\in \CI,
\]
 equation \eqref{main-adj} is transformed into a Fokker--Planck equation with constant diffusion, given by
\be\label{FP-adj-free}
\partial_t f(t,s)= \frac \lambda 2   \partial_s^2 f(t,s) - \frac {\left(1-\frac \lambda 2\right ) \sin s -m}{\cos s}\, \partial_s f(t,s), \quad  t>0, s\in \left(-\frac \pi 2, \frac \pi 2\right ).
\ee
The adjoint equation of \eqref{FP-adj-free} is in turn
\be\label{FP-free}
\partial_t g(t,z)= \frac \lambda 2   \partial_z^2 g(t,z) + \partial_z \left( \frac {\left(1-\frac \lambda 2\right ) \sin z -m}{\cos z}\,  g(t,z)\right ), \quad  t>0, z\in \left(-\frac \pi 2, \frac \pi 2\right ).
\ee
We denote
\be\label{w'}
W_{m,\lambda}'(z):=  \frac {\left(1-\frac \lambda 2\right ) \sin z -m}{\cos z}
\ee
and
\[
W_{m,\lambda}(z)=  \int _{0}^z W_{m,\lambda}'(\sigma) d \sigma.
\]
The steady states of  Equation \eqref{FP-free} are
\be\label{exp}
g_{m,\lambda}(z)= C_{m,\lambda} e^{-\frac 2  \lambda W_{m,\lambda}(z)} = e^{-\left(\frac 2  \lambda W_{m,\lambda}(z) -\log P_{m,\lambda} \right )}
\ee
for $C_{m,\lambda} >0$ as in \eqref{beta}
and explicitly
\be\label{beta-2}
g_{m,\lambda}(z)=  C_{m,\lambda} \frac 1{(\cos z)^{ 1-\frac  2\lambda}} \frac {\left(1+\tan \frac z2\right )^{\frac {2m}{\lambda}}}{\left(1-\tan \frac z2\right )^{\frac {2m}{\lambda}}}.
\ee
One can check that
\[
\begin{aligned}
& g_{m,\lambda}(z) \sim R_{m,\lambda}\left( \frac \pi 2-z\right )^{\frac 2\lambda -1-\frac {2m}\lambda},\quad z\to \frac \pi 2^-\\
& g_{m,\lambda}(z) \sim \tilde R_{m,\lambda}\left( \frac \pi 2+z\right )^{\frac 2\lambda -1+\frac {2m}\lambda},\quad z\to -\frac \pi 2^+
\end{aligned}
\]
with $R_{m,\lambda}$, $\tilde R_{m,\lambda}$ positive constants.
%So, we can remark that  $g_{m,\lambda}(z) \to 0$ for $z\to {\frac \pi 2}^-$ and $z\to -{\frac \pi 2}^+$ for $m$, $\lambda$ satisfying condition \eqref{cond}.
Moreover, we have
\be\label{stati-staz}
\frac {g_{m,\lambda}(\arcsin y)}{\sqrt{1-y^2}} =  v_{m,\lambda}(y),\quad y\in \CI
\ee
with $v_{m,\lambda}$ as in \eqref{beta} or, equivalently,
\[
g_{m,\lambda} (z)=  v_{m,\lambda}(\sin z) \cos z, \quad z\in \left( -\frac \pi 2, \frac \pi 2\right ).
\]
It is immediate to show that $g_{m,\lambda}$ satisfies the assumptions of Bakry--Emery criterion on $ \left(-\frac \pi 2, \frac \pi 2\right )$. Since the latter is an open interval (and so it is not a complete manifold), we will overcome this difficulty by a suitable approximation argument.
Resorting to \eqref{exp}, we need to evaluate $\frac 2\lambda W_{m,\lambda}''(z)$. We obtain
\[
\frac 2\lambda W_{m,\lambda}''(z)= \frac 2\lambda  \frac d{dz} W_{m,\lambda}'(z) =\frac 2\lambda  \frac{\left(1-\frac \lambda 2\right )+ m\sin z}{\cos^2 z}, \quad z\in \left( -\frac \pi 2, \frac \pi 2\right ).
\]
Therefore, provided $1-\frac \lambda 2 \geq |m|$, 
\[
\inf_{ \left( -\frac \pi 2, \frac \pi 2\right )} W_{m,\lambda}''(z) \ge 0
\]
and the function $W_{m,\lambda}(z)$ is convex on $ \left( -\frac \pi 2, \frac \pi 2\right )$.
If $m=0$,  
\be\label{rho_0}
\min_{\left( -\frac \pi 2, \frac \pi 2\right )} W_{0,\lambda}''(z)= W_{0,\lambda}''(0)=  1-\frac\lambda 2:= \rho_{0,\lambda}.
\ee
Consequently, in order to apply Bakry--Emery criterion, we have to assume $1-\frac \lambda 2 >0$. 

Let us now set $m \not=0$.
Since
\[
\frac d {dz} W_{m,\lambda}''(z)= \left(-\frac 1{\cos^3 z}\right )\left(m\sin^2 z +(\lambda-2)\sin z+m\right ),
\]
for any given $m\in \CI$, $m\neq 0$ and $\lambda$ such that $1-\frac \lambda 2 \geq |m|$,  there exists $\bar z_{m,\lambda} \in \left(-\frac \pi 2,\frac \pi 2\right )$ such that 
\be\label{rho_mlambda}
\min_{\left( -\frac \pi 2, \frac \pi 2\right )} W_{m,\lambda}''(z)= W_{m,\lambda}''(\bar z_{m,\lambda})= 
\frac 12 \left(1-\frac \lambda 2 + \sqrt{\left(1-\frac \lambda 2\right )^2-m^2}\right) := \rho_{m,\lambda}>0.
\ee
%and in an analogous way, for $m\in(-1,0)$ there exists $\bar z_{m,\lambda} \in \left(-\frac \pi 2,0\right )$ so that 
%\[
%\min_{\left( -\frac \pi 2, \frac \pi 2\right )} W_{m,\lambda}''(z)= W_{m,\lambda}''(\bar z_{m,\lambda})= 
%\frac {m^2}{2\left(1-\frac \lambda 2 - \sqrt{\left(1-\frac \lambda 2\right )^2-m^2}\right)} >0.
%\]
If we could apply Bakry--Emery criterion directly on $ \left(-\frac \pi 2,\frac \pi 2\right )$ we would obtain, for all $f$ probability densities on $ \left(-\frac \pi 2,\frac \pi 2\right )$ absolutely continuous with respect to $g_{m,\lambda}$, the logarithmic Sobolev inequality
\[
\int_{-\frac \pi 2}^{\frac \pi 2} f(z) \log \frac {f(z)}{g_{m,\lambda}(z)} dz \leq \frac \lambda{4 \rho_{m,\lambda}} \int_{-\frac \pi2}^{\frac \pi 2} \left(\frac d{dz} \log \frac{f(z)}{g_{m,\lambda}(z)} \right )^2f(z) dz, 
\] 
where the explicit constants $\rho_{m,\lambda}$ are defined  in \eqref{rho_0} and \eqref{rho_mlambda}.
Since $\left(-\frac \pi 2,\frac \pi 2\right )$ is not a complete manifold we perform an approximation argument.
Let us fix $m\in \CI$ and $\lambda >0$ satisfying \eqref{cond} and let $f$ be a probability density on $\left(-\frac \pi 2,\frac \pi 2\right )$ absolutely continuous with respect to $g_{m,\lambda}$.
For $\eps >0$ let us define
\begin{align*}
&
%\label{f_eps}
f_\eps=\frac 1{A_\eps} f\chi_{\left[-\frac \pi 2+\eps,\frac \pi 2-\eps \right]}, \text{ with } A_\eps = \int_{-\frac \pi 2 +\eps} ^{\frac \pi 2 -\eps} f(z) dz\\
&
%\label{g_eps}
g_\eps = \frac 1{B_\eps}g_{m,\lambda}\chi_{\left[-\frac \pi 2+\eps,\frac \pi 2-\eps \right]}, \text{ with } B_\eps = \int_{-\frac \pi 2 +\eps} ^{\frac \pi 2 -\eps} g_{m,\lambda}(z) dz.
\end{align*}
Of course $f_\eps$ and $g_\eps$ are probability densities and $A_\eps \to 1$, $B_\eps \to 1$ for $\eps \to 0$. Moreover by \eqref{exp}
\[
g_\eps (z)= e^{-\left(\frac 2  \lambda W_{m,\lambda}(z) -\log P_{m,\lambda} + \log B_\eps \right )} \chi_{\left[-\frac \pi 2+\eps,\frac \pi 2-\eps \right]}(z)
\]
and $f_\eps$ is absolutely continuous with respect to $g_\eps$ on $\left[-\frac \pi 2+\eps,\frac \pi 2-\eps \right]$. 
For all $\eps >0$ we have
\[
\frac {d^2}{dz^2} \left(\frac 2  \lambda W_{m,\lambda}(z) -\log P_{m,\lambda} + \log B_\eps\right ) = \frac 2  \lambda W''_{m,\lambda}(z) \geq \frac 2\lambda \rho_{m,\lambda}.
\]
Since $g_\eps$ satisfies the assumptions of Bakry--Emery criterion on $\left[-\frac \pi 2+\eps,\frac \pi 2-\eps \right]$, we get for all $\eps >0$
\be\label{BE-approx}
\int_{-\frac \pi 2+\eps}^{\frac \pi 2-\eps} f_\eps(z) \log \frac {f_\eps(z)}{g_\eps(z)} dz \leq \frac \lambda{4 \rho_{m,\lambda}} \int_{-\frac \pi2+\eps}^{\frac \pi 2-\eps} \left(\frac d{dz} \log \frac{f_\eps(z)}{g_\eps(z)} \right )^2f_\eps(z) dz.
\ee
Now assume that
\be\label{Fisher-bounded}
 \int_{-\frac \pi2}^{\frac \pi 2} \left(\frac d{dz} \log \frac{f(z)}{g_{m,\lambda}(z)} \right )^2f(z) dz <\infty.
\ee
As far as the right hand side of \eqref{BE-approx} is concerned, by Lebesgue's dominated convergence theorem we get for $\eps \to 0$
\[
\begin{aligned}
&\int_{-\frac \pi2+\eps}^{\frac \pi 2-\eps} \left(\frac d{dz} \log \frac{f_\eps(z)}{g_\eps(z)} \right )^2f_\eps(z) dz\\
& = \frac 1 {A_\eps} \int_{-\frac \pi 2}^{\frac \pi 2}   \left(\frac d{dz} \log  \left(\frac {f(z)}{A_\eps} \frac {B_\eps}{g_{m,\lambda}(z)} \right ) \right )^2   f(z)  \chi_{\left[-\frac \pi 2+\eps,\frac \pi 2-\eps \right]}dz\\
& = \frac 1 {A_\eps} \int_{-\frac \pi 2}^{\frac \pi 2}   \left(\frac d{dz} \left(\log  \frac{f(z)}{g_{m,\lambda}(z)} + \log \frac {B_\eps}{A_\eps}\right )  \right )^2   f(z)  \chi_{\left[-\frac \pi 2+\eps,\frac \pi 2-\eps \right]}(z) dz\\
& = \frac 1 {A_\eps} \int_{-\frac \pi 2}^{\frac \pi 2}   \left(\frac d{dz} \log  \frac{f(z)}{g_{m,\lambda}(z)}   \right )^2   f(z)  \chi_{\left[-\frac \pi 2+\eps,\frac \pi 2-\eps \right]}(z) dz \to 
 \int_{-\frac \pi2}^{\frac \pi 2} \left(\frac d{dz} \log \frac{f(z)}{g_{m,\lambda}(z)}\right )^2f(z) dz.
\end{aligned}
\]
Letting $\eps\to 0$, for the left hand side we obtain
\[
\begin{aligned}
&\int_{-\frac \pi 2+\eps}^{\frac \pi 2-\eps} f_\eps(z) \log \frac {f_\eps(z)}{g_\eps(z)} dz =  \int_{-\frac \pi 2}^{\frac \pi 2} \frac {f(z)}{A_\eps}  \log  \left(\frac {f(z)}{A_\eps} \frac {B_\eps}{g_{m,\lambda}(z)} \right ) \chi_{\left[-\frac \pi 2+\eps,\frac \pi 2-\eps \right]}(z)dz\\
& =  \frac 1{A_\eps} \int_{-\frac \pi 2}^{\frac \pi 2}f(z)  \log  \frac {f(z)}{g_{m,\lambda}(z)}  \chi_{\left[-\frac \pi 2+\eps,\frac \pi 2-\eps \right]}(z) dz + 
\frac 1{A_\eps}   \log \frac  {B_\eps}  {A_\eps}    \int_{-\frac \pi 2}^{\frac \pi 2} f(z) \chi_{\left[-\frac \pi 2+\eps,\frac \pi 2-\eps \right]}(z) dz \\
&\to   \int_{-\frac \pi 2}^{\frac \pi 2} f(z) \log \frac {f(z)}{g_{m,\lambda}(z)} dz.
\end{aligned}
\]
Indeed,  by Lebesgue's dominated convergence theorem
\[
\frac 1{A_\eps}   \log \frac  {B_\eps}  {A_\eps}    \int_{-\frac \pi 2}^{\frac \pi 2} f(z) \chi_{\left[-\frac \pi 2+\eps,\frac \pi 2-\eps \right]}(z) dz \to  0, \quad \eps \to 0,
\]
and thanks to the identity
\[
\begin{aligned}
& \int_{-\frac \pi 2}^{\frac \pi 2}f(z)  \log  \frac {f(z)}{g_{m,\lambda}(z)}  \chi_{\left[-\frac \pi 2+\eps,\frac \pi 2-\eps \right]}(z) dz\\
&=  \int_{-\frac \pi 2}^{\frac \pi 2}\left( \frac{f(z)}{ g_{m,\lambda}(z)} \log  \frac {f(z)}{g_{m,\lambda}(z)}  -\frac{f(z)}{ g_{m,\lambda}(z)} +1\right ) g_{m,\lambda}(z) \chi_{\left[-\frac \pi 2+\eps,\frac \pi 2-\eps \right]}(z) dz\\
&\quad\quad + \int _{-\frac \pi 2}^{\frac \pi 2}\left(f(z)- g_{m,\lambda}(z)\right )  \chi_{\left[-\frac \pi 2+\eps,\frac \pi 2-\eps \right]}(z) dz,
\end{aligned}
\]
by the Lebesgue's dominated   and monotone convergence theorems we conclude
\[
 \frac 1{A_\eps} \int_{-\frac \pi 2}^{\frac \pi 2}f(z)  \log  \frac {f(z)}{g_{m,\lambda}(z)}  \chi_{\left[-\frac \pi 2+\eps,\frac \pi 2-\eps \right]}(z) dz \to   \int_{-\frac \pi 2}^{\frac \pi 2}f(z)  \log  \frac {f(z)}{g_{m,\lambda}(z)} dz, \quad \eps \to 0.
 \]
Finally, for all $f$ probability densities on $ \left(-\frac \pi 2,\frac \pi 2\right )$ absolutely continuous with respect to $g_{m,\lambda}$ it holds
\be\label{logg}
\int_{-\frac \pi 2}^{\frac \pi 2} f(z) \log \frac {f(z)}{g_{m,\lambda}(z)} dz \leq \frac \lambda {4 \rho_{m,\lambda}} \int_{-\frac \pi2}^{\frac \pi 2} \left(\frac d{dz} \log \frac{f(z)}{g_{m,\lambda}(z)} \right )^2f(z) dz,
\ee
where $\rho_{m,\lambda}$ are defined as in \eqref{rho_0} and \eqref{rho_mlambda}.
Going back to the original functions, by means of the change of variables 
\[
z=\arcsin y
\]
 the logarithmic Sobolev inequality \fer{logg} transforms into a weighted logarithmic-Sobolev inequality. In fact,   for any $f$ probability density on $\left(-\frac \pi 2,\frac \pi 2\right )$ absolutely continuous with respect to $g_{m,\lambda}$ 
\begin{multline}
\int_{-1}^{1} f(\arcsin y) \log \frac {f(\arcsin y)}{g_{m,\lambda}(\arcsin y)} \frac 1{\sqrt{1-y^2}} dy  \leq \\
\frac 1{2\rho_{m,\lambda}} \int_{-1}^{1}  \frac \lambda 2 \left(\frac d{dy} \log \left(\frac{f(\arcsin y)}{g_{m,\lambda}(\arcsin y)}\right ) \sqrt{1-y^2} \right )^2f(\arcsin y) \frac 1{\sqrt{1-y^2}} dy.
\end{multline}
Now, by \eqref{stati-staz} we get
\begin{multline}
\int_{-1}^{1}  \frac {f(\arcsin y) } {\sqrt{1-y^2}}  \log \frac { \frac {f(\arcsin y) } {\sqrt{1-y^2}} }{v_{m,\lambda}(y)} dy  \leq \\
\frac 1{2\rho_{m,\lambda}} \int_{-1}^{1}  \frac \lambda 2(1-y^2) \left(\frac d{dy} \log \left(\frac{ \frac {f(\arcsin y) } {\sqrt{1-y^2}} }{v_{m,\lambda}(y)}\right ) \right )^2  \frac {f(\arcsin y) }{\sqrt{1-y^2}} dy.
\end{multline}
In order to complete the proof of inequality \eqref{LS-peso} it enough to observe that $\varphi \in L^1(\CI)$ is a probability density absolutely continuous with respect to $v_{m,\lambda}$ if and only if $\varphi(y)=\frac {f(\arcsin y) } {\sqrt{1-y^2}}$
with $f \in L^1 \left(\left(-\frac \pi 2,\frac \pi 2\right )\right )$ is a probability density absolutely continuous with respect to $g_{m,\lambda}$. 
Inequality \eqref{LS-peso} is then proven with 
\be\label{Kmlambda}
K_{m,\lambda} = \frac 1{2\rho_{m,\lambda}}.
\ee

\medskip
\hfill$\square$

\begin{rem} It is worth comparing the results of exponential convergence in $L^1$ contained in \eqref{conv-expL1-L2} and \eqref{conv-expL1-logsob}.
By \eqref{Kmlambda}, for $m$ and $\lambda$ satisfying conditions \eqref{cond}  we get  \eqref{conv-expL1-logsob}:
\[
\|v(t) -v_{m,\lambda}\|_{L^1} \leq \sqrt 2 e^{- \rho_{m,\lambda}  t} \sqrt {H(v_0, v_{m,\lambda})}, \quad t>0,
\]
with $\rho_{m,\lambda}$ as in \eqref{rho_mlambda}.
On the other hand, for all $m\in \CI$ and $\lambda >0$ we get \eqref{conv-expL1-L2} :
\[
\left \| v(t)-v_{m,\lambda}\right \|_{L^1} \leq e^{-t} \left( \int_{-1}^1 \frac {(v_0(y)-v_{m,\lambda}(y))^2}{v_{m,\lambda}(y)} \, dy\right )^{\frac 12},\quad t>0.
\]
Since $\rho_{m,\lambda}\leq 1$
for all $m$, $\lambda$ satisfying conditions \eqref{cond}, the rate of exponential convergence in the second estimate is sharper than the first one.
Let us compare now the assumptions
\[
H(v_0, v_{m,\lambda}) = \int_{-1}^1 v_0(y)\log \frac {v_0(y)}{v_{m,\lambda}(y)}\, d y< \infty
\]
and
\[
\int_{-1}^1 \frac {(v_0(y)-v_{m,\lambda}(y))^2}{v_{m,\lambda}(y)} dy <\infty
\]
for the values of the parameters which fulfill conditions \eqref{cond}.
Since 
\[
x\log x  \geq x-1 + \frac 12 (x-1)^2\chi_{\left\{x\leq 1\right\} }(x), \quad x>0
\]
we get
\[
\begin{aligned}
&\int_{-1}^1 v_{m,\lambda}(y) \frac {v_0(y)}{v_{m,\lambda}(y)}\log \frac {v_0(y)}{v_{m,\lambda}(y)}\, d y\\
& \geq \int_{-1}^1 v_{m,\lambda}(y)\left( \frac {v_0(y)}{v_{m,\lambda}(y)} -1\right )\, d y + \frac 12 \int_{-1}^1 v_{m,\lambda}(y)\left( \frac {v_0(y)}{v_{m,\lambda}(y)} -1\right )^2 \chi_{\left\{v_0(y) \leq v_{m,\lambda}(y)    \right \}}(y) \, d y\\
&= \frac 12 \int_{-1}^1  \frac {(v_0(y)-v_{m,\lambda}(y))^2}{v_{m,\lambda}(y)} \chi_{\left\{v_0(y) \leq v_{m,\lambda}(y)    \right \}}(y) \, d y.
\end{aligned}
\]
So for $v_0 \leq v_{m,\lambda}$ the rate of convergence contained in \eqref{conv-expL1-L2} is stronger than that in 
 \eqref{conv-expL1-logsob}.
Moreover, 
\[
x\log x  \leq x-1 + \frac 12 (x-1)^2, \quad x\geq 1
\]
and so for $v_0 \geq v_{m,\lambda}$ we get
\[
\frac 12 \int_{-1}^1  \frac {(v_0(y)-v_{m,\lambda}(y))^2}{v_{m,\lambda}(y)}  \, d y \geq \int_{-1}^1 v_{m,\lambda}(y) \frac {v_0(y)}{v_{m,\lambda}(y)}\log \frac {v_0(y)}{v_{m,\lambda}(y)}\, d y.
\]
In this case, the convergence obtained by the new weighted logarithmic-Sobolev inequality could be the only one available.
Of course, in all the other cases the two conditions seem not to be comparable.
\end{rem}

\section{A distinguished case}\label{dist}

From Theorem \ref{LS-theo} one can extract some interesting consequences. The case $m =0$,  $\lambda = 1$ corresponds to the uniform density
 \[
 v_{0,1}(x) = \frac 12, \quad x \in \CI.
 \] 
Hence, considering that $K_{0,1} = 1$, for a given probability density $h$ on $\CI$, inequality \fer{LS-peso} takes the form
 \be\label{poi}
 \int_\CI  h(x) \log h(x) \, dx + \log 2 \le \frac 12 \int_\CI (1-x^2) \frac{(h'(x))^2}{h(x)}\, dx.
 \ee
A more suitable form is obtained by setting $h(x) = f^2(x)$ into \fer{poi}. One obtains the inequality
\be\label{LS-d}
 \int_\CI  f^2(x) \log f^2(x) \, dx + \log 2 \le 2 \int_\CI (1-x^2) (f'(x))^2\, dx,
 \ee
satisfied by all functions $f$ in $L^2(\CI)$ of $L^2$-norm equal to one.  Inequality \fer{LS-d} is the analogous of the standard \emph{Euclidean logarithmic-Sobolev inequality} established in Gross \cite{Gro}, which in  one-dimension reads
\be\label{LSI}
 \int_\R f^2(x) \log f^2(x) \, dx + \frac 12 \log(2\pi e^2) \le  2 \int_\R (f'(x))^2\, dx,
 \ee
and it is valid for all functions $f$ such that
 \[
 \int_\R f(x)^2\, dx = \int_\R x^2f^2(x) \, dx = 1.
 \]
Note that the main difference between the logarithmic-Sobolev inequality \fer{LSI} and the new inequality \fer{LS-d}, apart from the different interval of integration, is the presence of the weight on the right-hand side. 

Clearly, the constraint $\|f\|_2 = 1$ can be easily cut to give the (general) inequality
\be\label{poi1}
 \int_\CI  w^2(x) \log w^2(x) \, dx - \|w\|_2^2 \log \frac{\|w\|_2^2}2 \le  2 \int_\CI (1-x^2) (w'(x))^2\, dx,
\ee
which is valid for any function $w \in L^2(\CI)$.

\section{Numerical experiments}\label{nume}

In this short Section, we will focus on some numerical experiments that illustrate the time-evolution of the weighted logarithmic Sobolev inequality \fer{LS-peso} for various values of the parameter $\lambda$, and $m =0$. To this extent, we make use of numerical schemes for the Fokker--Planck equation \fer{op-FP}, recently considered in \cite{PZ}, that preserve the structural properties, like non negativity of the solution, entropy dissipation and large time behavior.  These properties are essential for a correct description of the underlying physical problem. 

The experiments have been done by choosing as initial density a bimodal normal distribution centered in $\pm 1/2$, normalized in the interval $(-1,1)$. 
It is clearly shown in Figure \fer{fig:test1} that inequality \fer{LS-peso} gives a better approximation to the entropy decay towards equilibrium for small values of the parameter. In all cases, however, exponential in time decay follows. 

\begin{figure}\centering
    {\includegraphics[width=7cm]{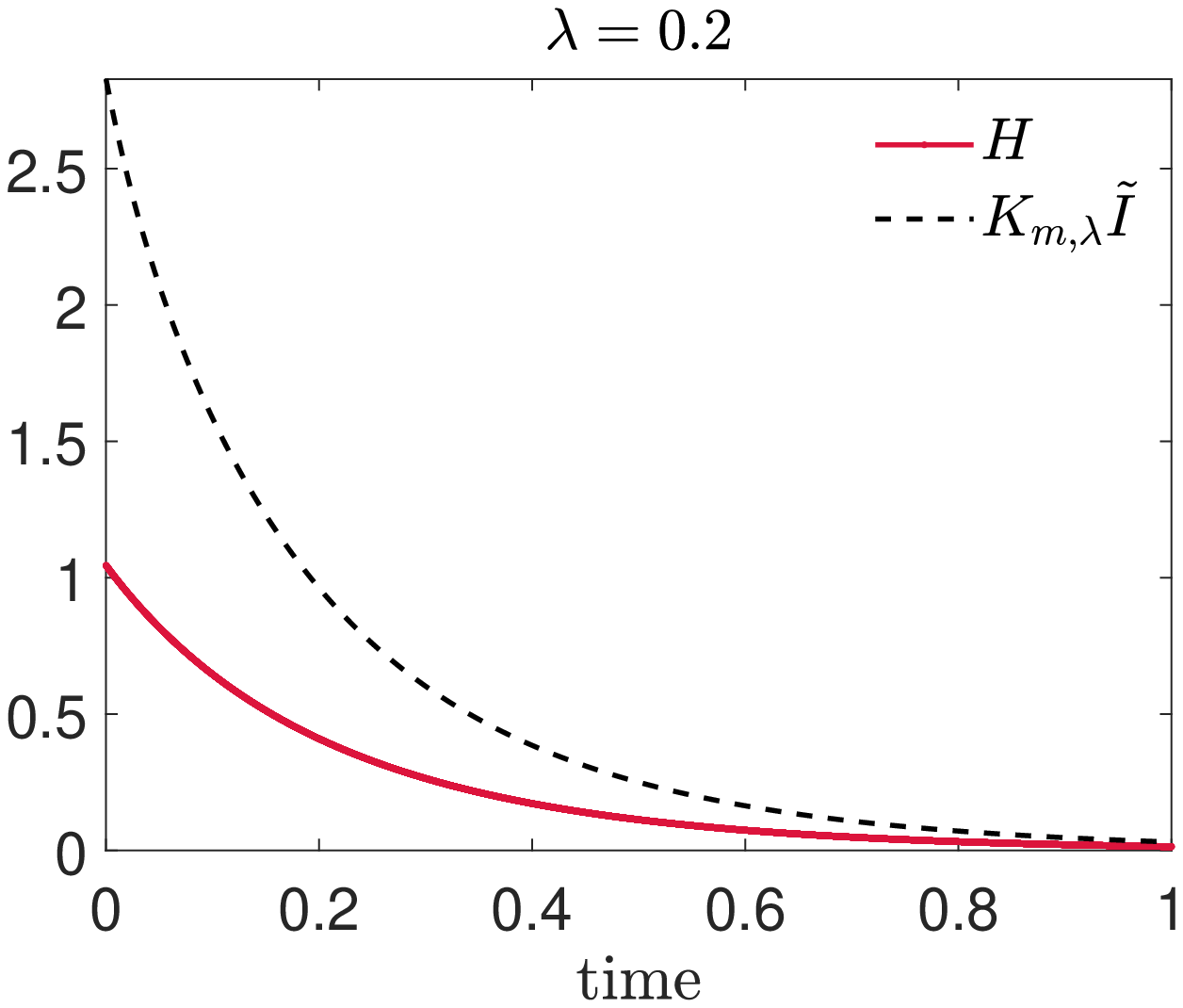}}
    \hspace{+0.35cm}
    {\includegraphics[width=7cm]{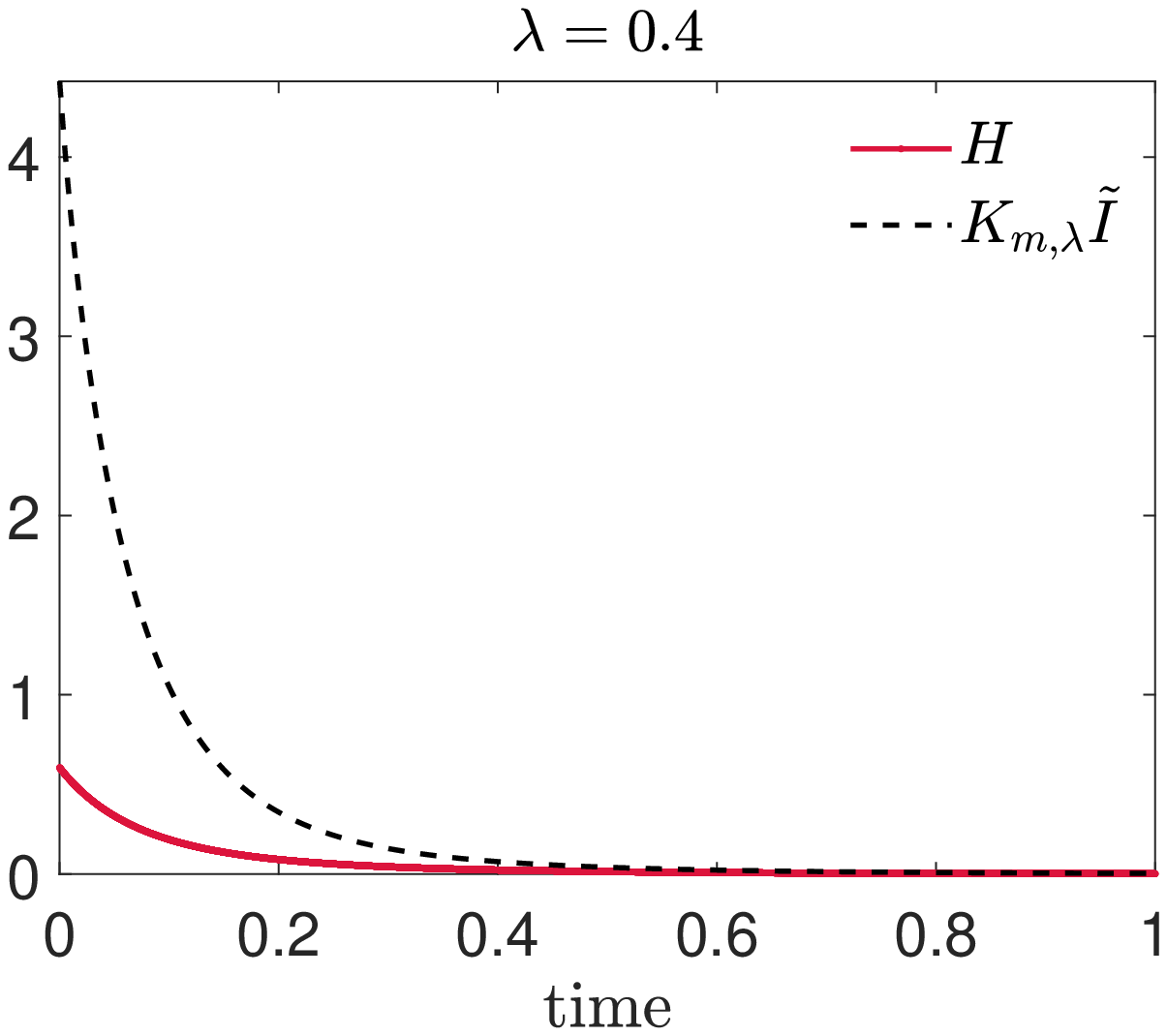}}\\
    \vspace{+0.45cm}
    {\includegraphics[width=7cm]{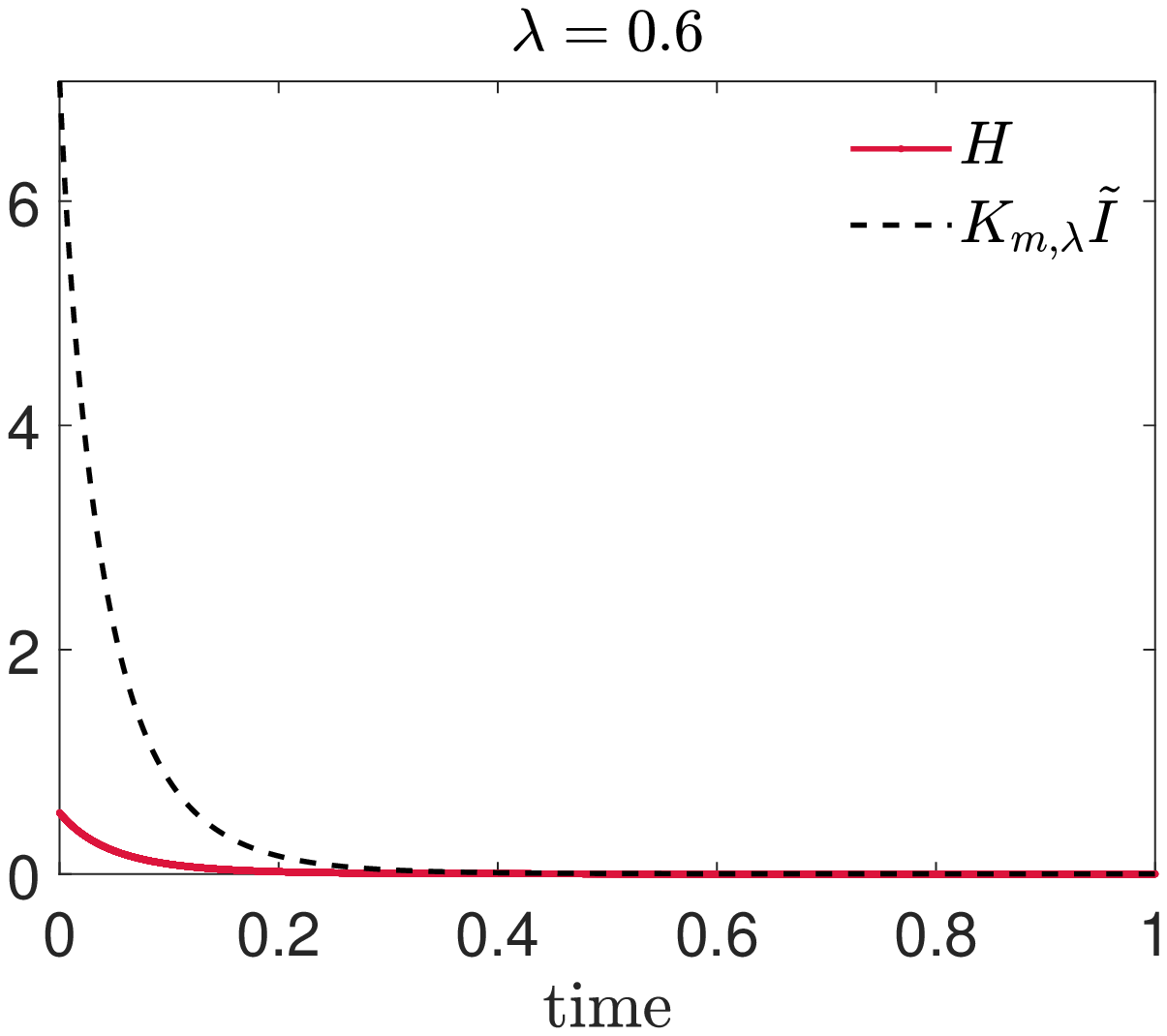}}
    \vspace{+0.35cm}
    {\includegraphics[width=7cm]{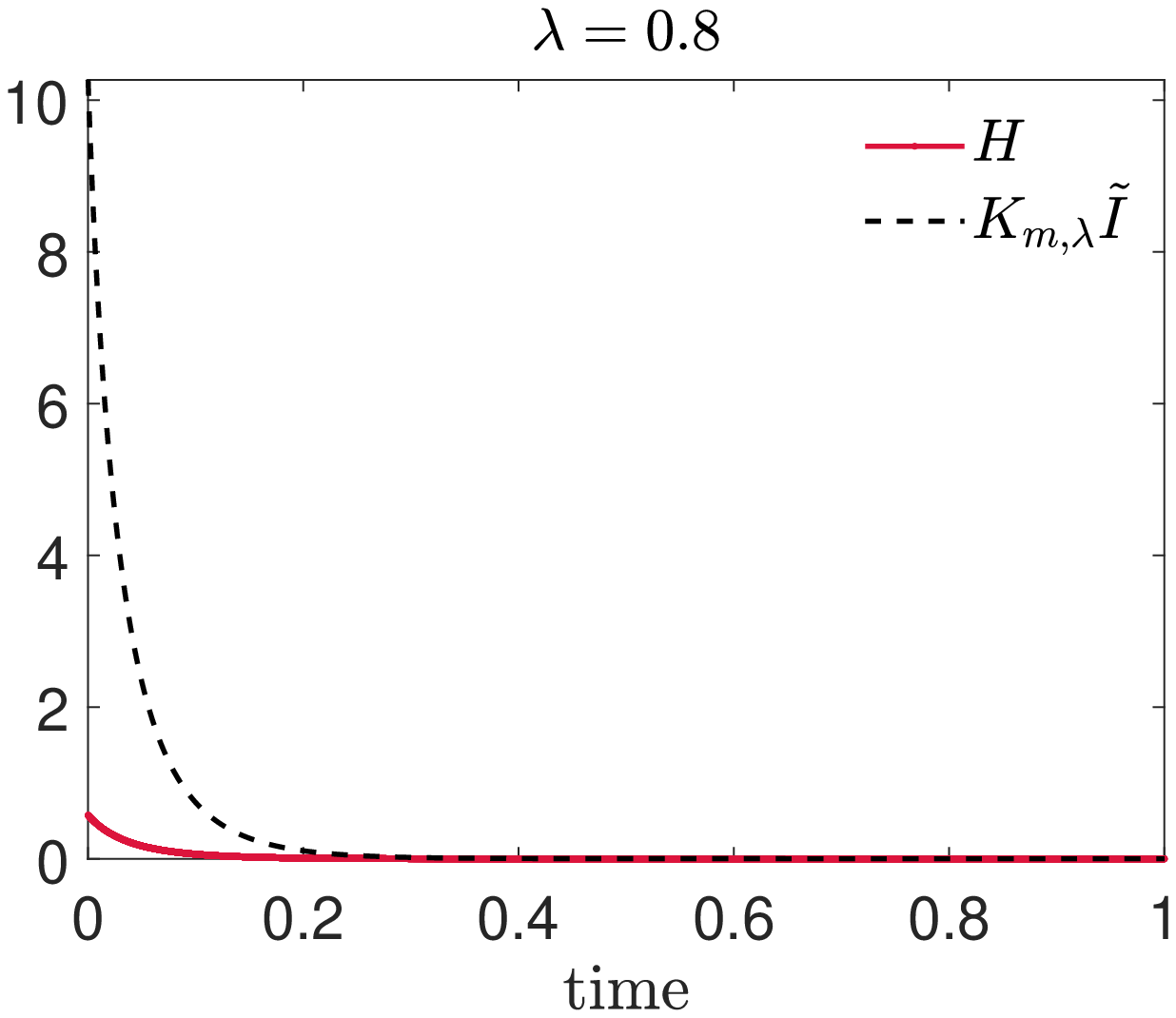}}
    \caption{Time evolution of the weighted logarithmic Sobolev inequality \fer{LS-peso} for the Fokker--Planck model as a function of the  parameter $\lambda$.  }\label{fig:test1}
\end{figure}

In Figure \fer{fig:test2} it is shown that the numerical method correctly reproduce the equilibrium Beta density \fer{beta} of the Fokker--Planck equation \fer{op-FP} for any value of the parameter $\lambda$. 
\begin{figure}\centering
    {\includegraphics[width=7cm]{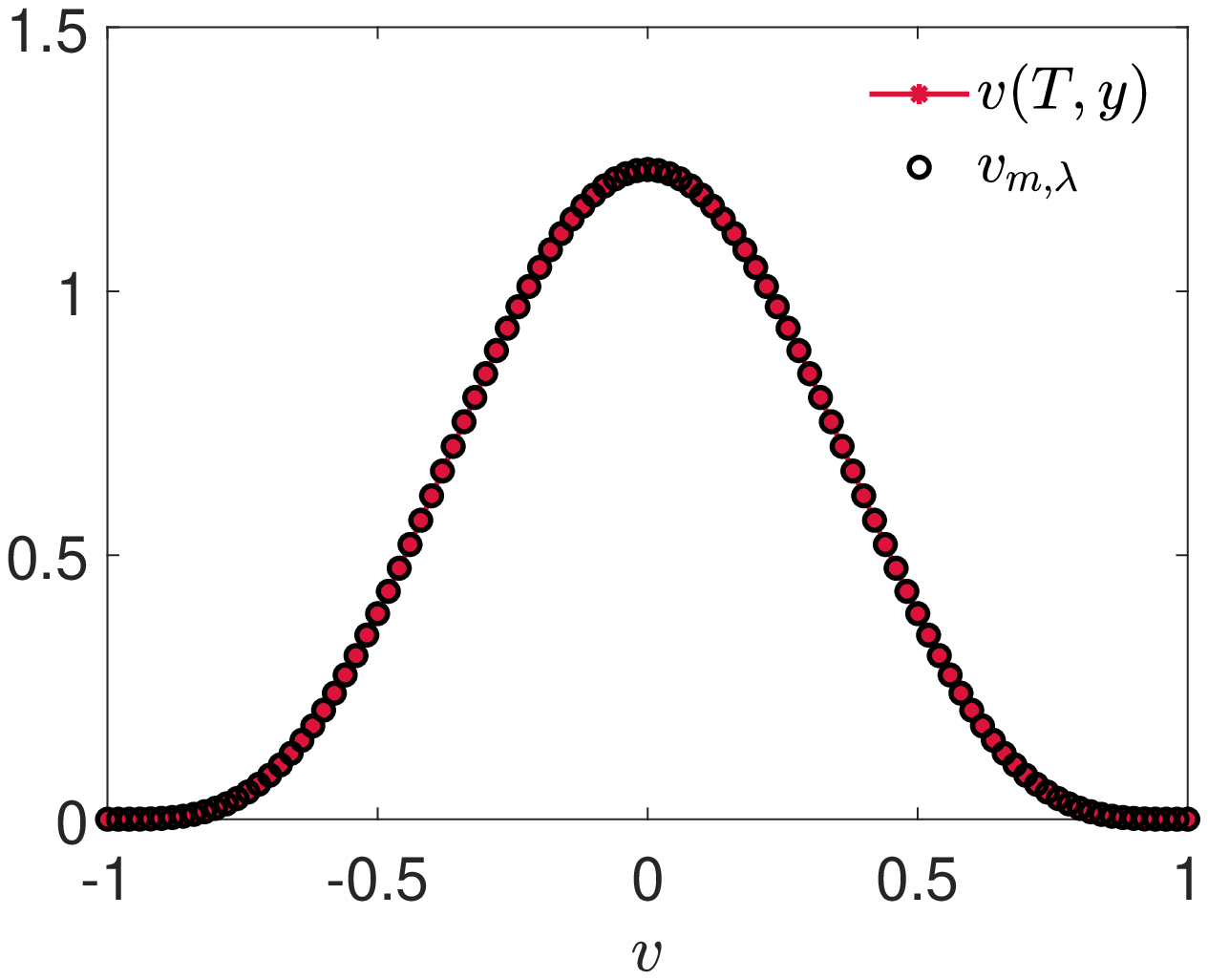}}
    \hspace{+0.35cm}
    {\includegraphics[width=7cm]{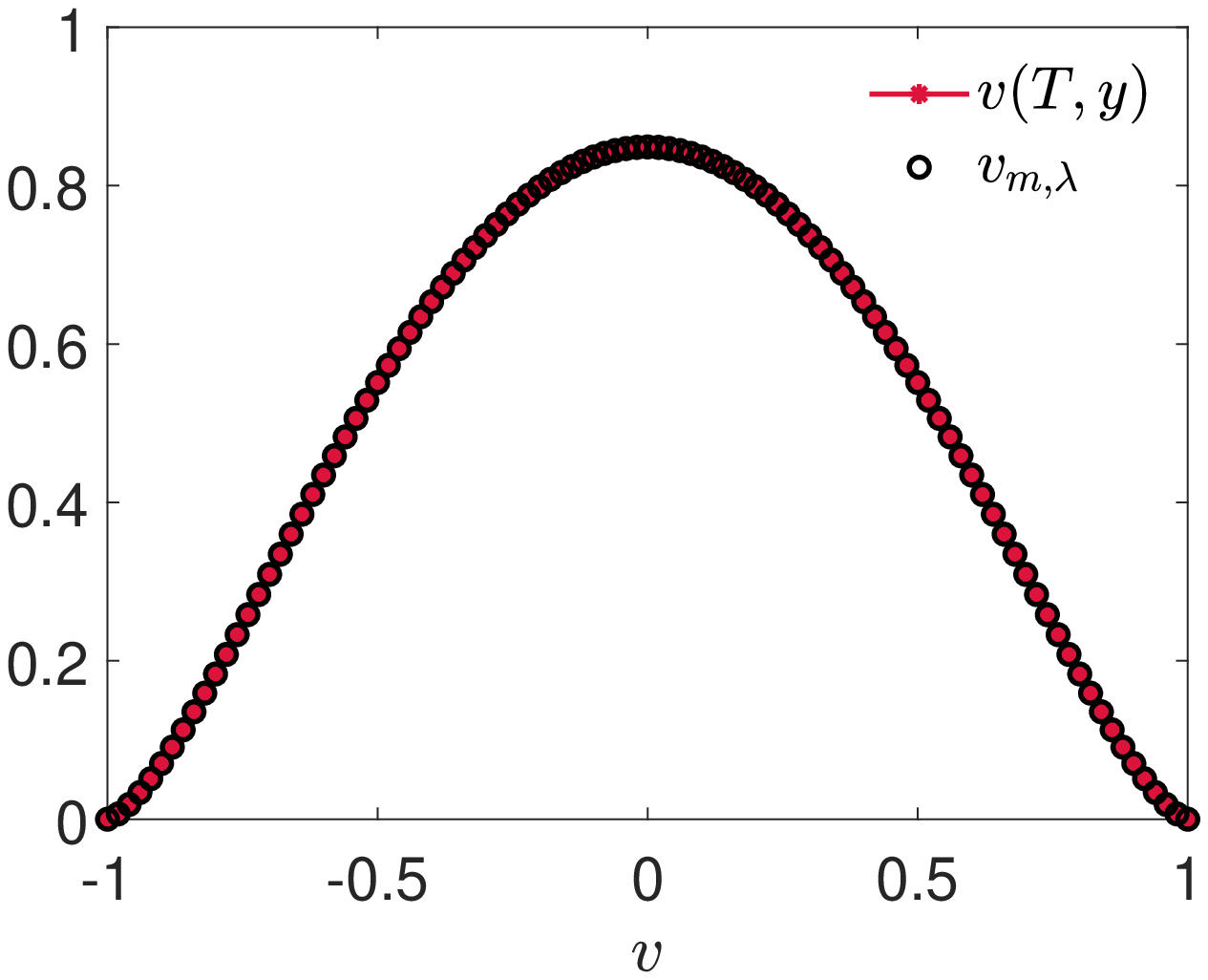}}\\
    \vspace{+0.45cm}
    {\includegraphics[width=7cm]{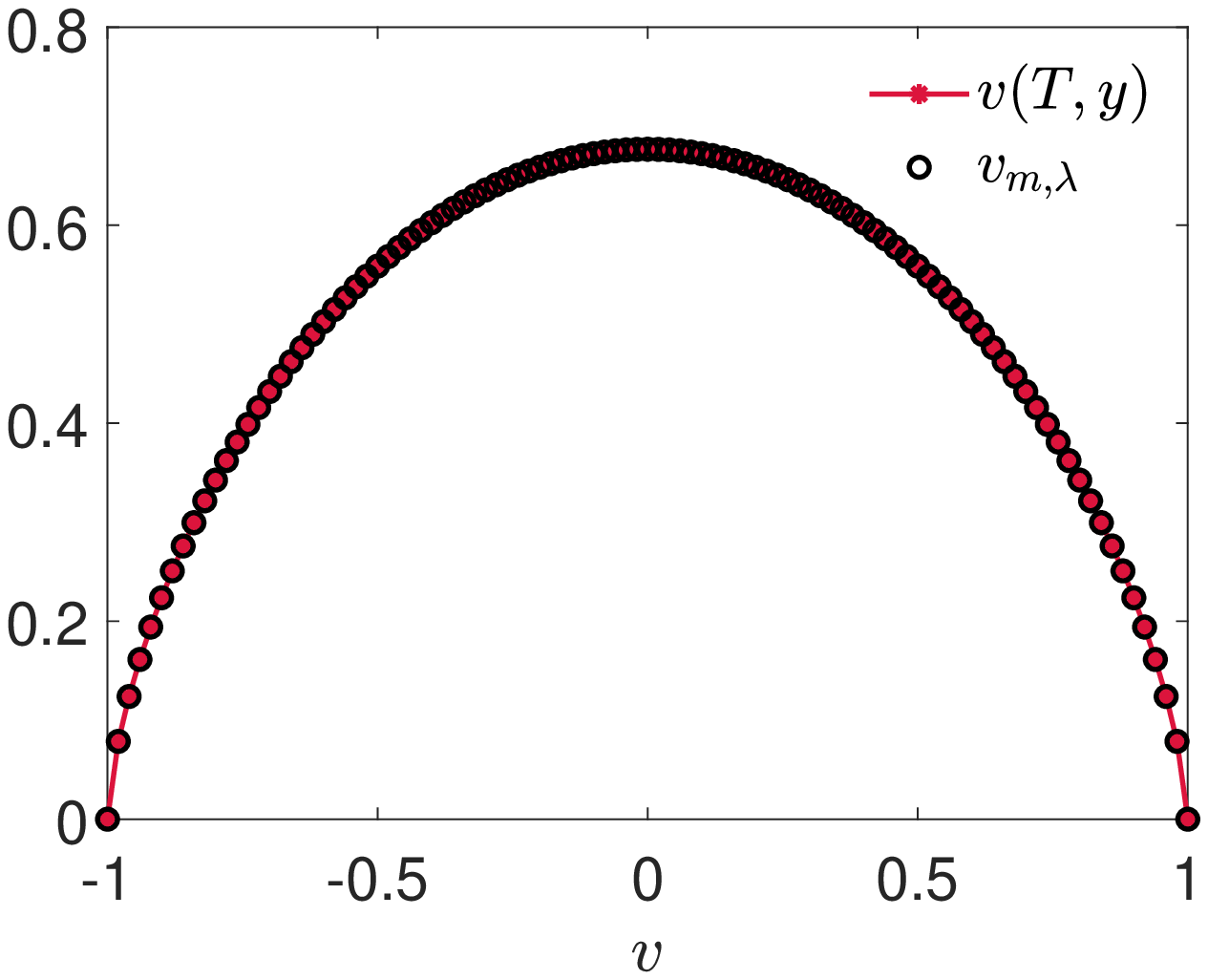}}
    \vspace{+0.35cm}
    {\includegraphics[width=7cm]{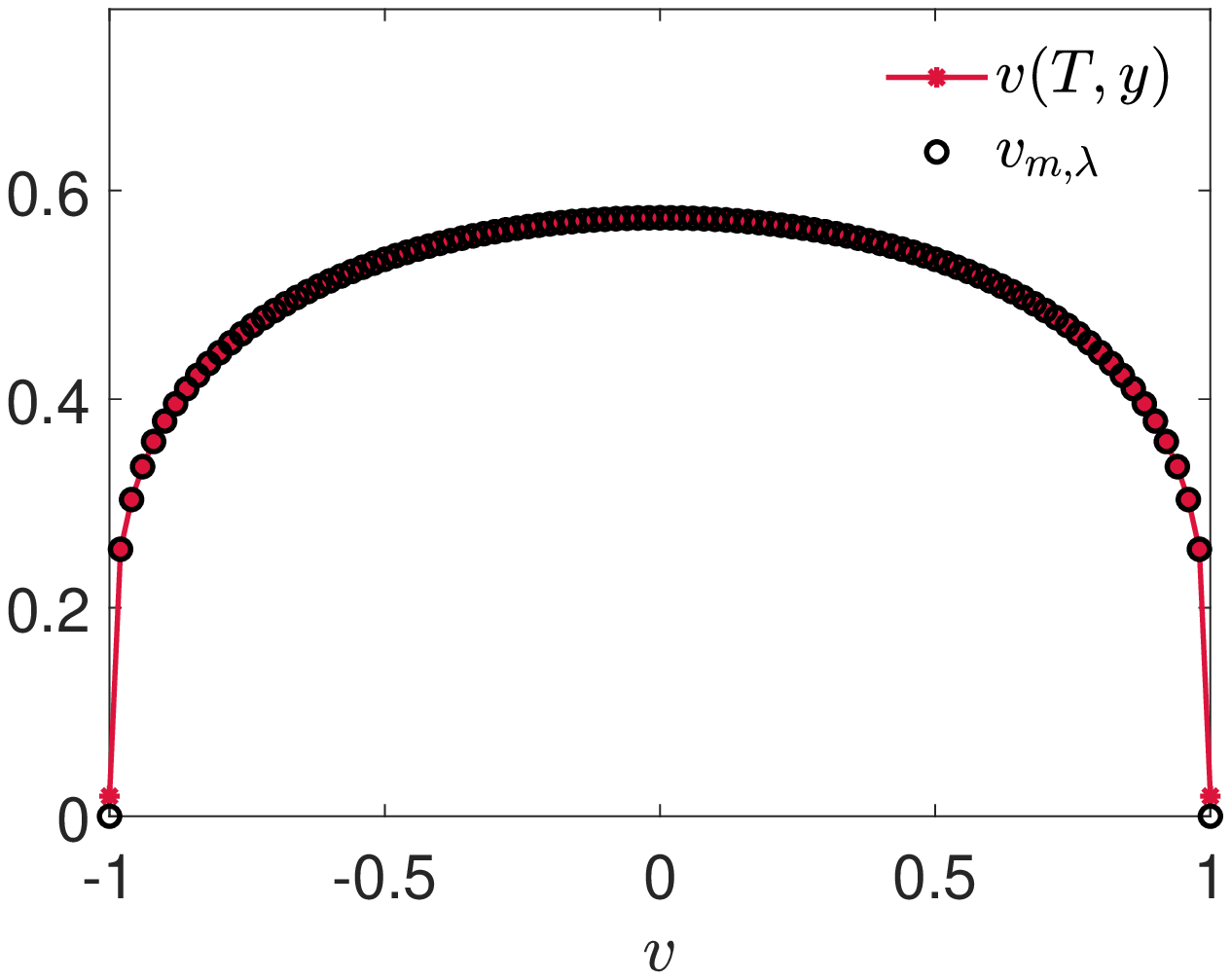}}
    \caption{Comparison between the analytic and numerical steady state solutions of the Fokker--Planck model for different values of the  parameter $\lambda$. Top left $\lambda=0.2$, top right $\lambda=0.4$, bottom left $\lambda=0.6$, bottom right $\lambda=0.8$.  }\label{fig:test2}
\end{figure}

\section{Conclusions}\label{concl}
In this paper, we investigated the large-time behavior of the solution of a Fokker--Planck type equation arising in the study of opinion formation. The same equation, in adjoint form, is well-known under the name of Wright--Fisher equation, and has been exhaustively studied, among others, in a recent paper by Epstein and Mazzeo \cite{EM10} from the point of view of semigroup theory. Our approach to the analysis of the large-time behavior of the solution is different, and relies on the classical study of the evolution of the relative Shannon entropy, which is of common use in the field of kinetic theory. The study of lower bounds for the relative entropy production leads to a new type of logarithmic-Sobolev inequality with weight, satisfied by the Beta-type densities, which allow us in various cases to conclude with exponential convergence to the equilibrium with an explicit rate. 

The case in which the Beta-type density reduces to a uniform variable separates in a natural way from the others, and gives rise to the corresponding of the Euclidean logarithmic-Sobolev inequality. 
\vskip 2cm

\section*{Acknowledgement} This work has been written within the
activities of GNFM and GNAMPA groups  of INdAM (National Institute of
High Mathematics).

The support of the Italian Ministry of Education, University and Research (MIUR) through the 
``Dipartimenti di Eccellenza Program (2018--2022)'' - Dept. of Mathematics ``F. Casorati'', University of Pavia, is kindly acknowledged. 

The authors also kindly acknowledge R. Mazzeo for fruitful explanations on the paper \cite{EM10}, and M. Zanella, who performed the numerical experiments of Section \ref{nume} by means of the entropic numerical scheme introduced in \cite{PZ}.

%%%%%%%%%%%%%%%%%%%%%%%%%%%%%%%%%%%%%%%%%%%%%%%%%%%%%%%%%%%%%%%%%%%%%%%%%%%%%%%
\bibliography{biblio-wright-fisher}

\begin{thebibliography}{10}

\bibitem{MTV}
A.~Arnold, J.~A. Carrillo, L.~Desvillettes, J.~Dolbeault, A.~J\"ungel,
  C.~Lederman, P.~A. Markowich, G.~Toscani, and C.~Villani.
\newblock Entropies and equilibria of many-particle systems: an essay on recent
  research.
\newblock {\em Monatsh. Math.}, 142(1-2):35--43, 2004.

\bibitem{AMTU}
A.~Arnold, P.~Markowich, G.~Toscani, and A.~Unterreiter.
\newblock On convex {S}obolev inequalities and the rate of convergence to
  equilibrium for {F}okker-{P}lanck type equations.
\newblock {\em Comm. Partial Differential Equations}, 26(1-2):43--100, 2001.

\bibitem{BE}
D.~Bakry and M.~\'Emery.
\newblock Diffusions hypercontractives.
\newblock In {\em S\'eminaire de pro\-ba\-bi\-li\-t\'es, {XIX}, 1983/84},
  volume 1123 of {\em Lecture Notes in Math.}, pages 177--206. Springer,
  Berlin, 1985.

\bibitem{BKR03}
E.~Ben-Naim, P.~L. Krapivsky, and S.~Redner.
\newblock Bifurcation and patterns in compromise processes.
\newblock {\em Phys. D}, 183(3-4):190--204, 2003.

\bibitem{BKVR03}
E.~Ben-Naim, P.~L. Krapivsky, F.~Vazquez, and S.~Redner.
\newblock Unity and discord in opinion dynamics.
\newblock {\em Phys. A}, 330(1-2):99--106, 2003.
\newblock Randomness and complexity (Eilat, 2003).

\bibitem{Bou}
L.~Boudin and F.~Salvarani.
\newblock The quasi-invariant limit for a kinetic model of sociological
  collective behavior.
\newblock {\em Kinet. Relat. Models}, 2(3):433--449, 2009.

\bibitem{C}
I.~Csisz\'ar.
\newblock Eine informationstheoretische {U}ngleichung und ihre {A}nwendung auf
  den {B}eweis der {E}rgodizit\"at von {M}arkoffschen {K}etten.
\newblock {\em Magyar Tud. Akad. Mat. Kutat\'o Int. K\"ozl.}, 8:85--108, 1963.

\bibitem{DMPW}
B.~D\"{u}ring, P.~Markowich, J.-F. Pietschmann, and M.-T. Wolfram.
\newblock {B}oltzmann and {F}okker-{P}lanck equations modelling opinion
  formation in the presence of strong leaders.
\newblock {\em Proc. R. Soc. A}, 465(2112):3687--3708, 2009.

\bibitem{EM10}
C.L. Epstein and R.~Mazzeo.
\newblock Wright-{F}isher diffusion in one dimension.
\newblock {\em SIAM J. Math. Anal.}, 42(2):568--608, 2010.

\bibitem{FPTT17}
G.~Furioli, A.~Pulvirenti, E.~Terraneo, and G.~Toscani.
\newblock Fokker-{P}lanck equations in the modeling of socio-economic
  phenomena.
\newblock {\em Math. Models Methods Appl. Sci.}, 27(1):115--158, 2017.

\bibitem{Gro}
L.~Gross.
\newblock Logarithmic {S}obolev inequalities.
\newblock {\em Amer. J. Math.}, 97(4):1061--1083, 1975.

\bibitem{Kla}
C.A.J. Klaassen.
\newblock On an inequality of {C}hernoff.
\newblock {\em Ann. Probab.}, 13(3):966--974, 1985.

\bibitem{LL08}
C.~Le~Bris and P.-L. Lions.
\newblock Existence and uniqueness of solutions to {F}okker-{P}lanck type
  equations with irregular coefficients.
\newblock {\em Comm. Partial Differential Equations}, 33(7-9):1272--1317, 2008.

\bibitem{McK66}
H.~P. McKean, Jr.
\newblock Speed of approach to equilibrium for {K}ac's caricature of a
  {M}axwellian gas.
\newblock {\em Arch. Rational Mech. Anal.}, 21:343--367, 1966.

\bibitem{NPT}
G.~Naldi, L.~Pareschi, and G.~Toscani, editors.
\newblock {\em Mathematical modeling of collective behavior in socio-economic
  and life sciences}.
\newblock Modeling and Simulation in Science, Engineering and Technology.
  Birkh\"auser Boston, Inc., Boston, MA, 2010.

\bibitem{PT13}
L.~Pareschi and G.~Toscani.
\newblock {\em Interacting Multiagent Systems: Kinetic equations and Monte
  Carlo methods}.
\newblock OUP Oxford, 2013.

\bibitem{PZ}
L.~Pareschi and M.~Zanella.
\newblock Structure preserving schemes for nonlinear {F}okker--{P}lanck
  equations and applications.
\newblock {\em Journal of Scientific Computing}, 74(3):1575--1600, 2018.

\bibitem{T99}
G.~Toscani.
\newblock Entropy production and the rate of convergence to equilibrium for the
  {F}okker-{P}lanck equation.
\newblock {\em Quart. Appl. Math.}, 57(3):521--541, 1999.

\bibitem{Tos06}
G.~Toscani.
\newblock Kinetic models of opinion formation.
\newblock {\em Commun. Math. Sci.}, 4(3):481--496, 2006.

\bibitem{T13}
G.~Toscani.
\newblock An information-theoretic proof of {N}ash's inequality.
\newblock {\em Atti Accad. Naz. Lincei Rend. Lincei Mat. Appl.}, 24(1):83--93,
  2013.

\end{thebibliography}
\bibliographystyle{plain}

\end{document}